\newtheorem{theorem}{Theorem}  [section]
\begin{document}
\title{Resource Optimization for Blockchain-based Federated Learning in Mobile Edge Computing
\author{Zhilin Wang, Qin Hu, Zehui Xiong}
\thanks{This work is partly supported by the US NSF under grant CNS-2105004.}
\IEEEcompsocitemizethanks{
\IEEEcompsocthanksitem Zhilin Wang and Qin Hu are with the Department of Computer \& Information Science, 
Indiana University-Purdue University Indianapolis,  USA. The corresponding author is Qin Hu. Email: \{wangzhil, qinhu\}@iu.edu\\
\IEEEcompsocthanksitem Zehui Xiong is with the Pillar of Information Systems Technology \& Design, Singapore University of Technology Design, Singapore. Email: zehui\_xiong@sutd.edu.sg
}
}


\IEEEtitleabstractindextext{%
\begin{abstract}
With the development of mobile edge computing (MEC) and blockchain-based federated learning (BCFL), a number of studies suggest deploying BCFL on edge servers. In this case, resource-limited edge servers need to serve both mobile devices for their offloading tasks and the BCFL system for model training and blockchain consensus in a cost-efficient manner without sacrificing the service quality to any side. To address this challenge, this paper proposes a resource allocation scheme for edge servers, aiming to provide the optimal services with the minimum cost. Specifically, we first analyze the energy consumed by the MEC and BCFL tasks, and then use the completion time of each task as the service quality constraint. Then, we model the resource allocation challenge into a multivariate, multi-constraint, and convex optimization problem. To solve the problem in a progressive manner, we design two algorithms based on the alternating direction method of multipliers (ADMM) in both the homogeneous and heterogeneous situations with equal and on-demand resource distribution strategies, respectively. The validity of our proposed algorithms is proved via rigorous theoretical analysis.
Through extensive experiments, the convergence and efficiency of our proposed resource allocation schemes are evaluated. To the best of our knowledge, this is the first work to investigate the resource allocation dilemma of edge servers for BCFL in MEC.
\end{abstract}

\begin{IEEEkeywords}
Blockchain, federated learning, resource allocation, mobile edge computing, ADMM
\end{IEEEkeywords}}

\maketitle

\IEEEdisplaynontitleabstractindextext

\IEEEpeerreviewmaketitle

\section{Introduction}
\IEEEPARstart{A}S embedded sensors are widely deployed on mobile devices, such as smartphones and smart vehicles, they can pervasively perceive the physical world and collect an extensive amount of data. With the advances of hardware technology, it becomes promising for devices to process the collected data locally, such as training machine learning models. However, as the resources of mobile devices are usually inadequate, they may experience difficulty finishing computing-intensive tasks, 
which drives the emergence of mobile edge computing (MEC). 
Its basic idea is to facilitate mobile devices offloading computing tasks to their nearby edge servers with sufficient resources and then obtain the calculated results with communication-efficiency in close proximity \cite{abbas2017mobile, liang2017mobile}. 
MEC has been applied to many fields, such as the Internet of Things (IoT) \cite{sun2016edgeiot, sabella2016mobile}, smart healthcare \cite{sodhro2019mobile, li2019edgecare}, and smart transportation \cite{chen2019deep, zhou2019lightweight}. 

To address the main challenges of federated learning (FL)\cite{mcmahan2017communication, bonawitz2019towards}, such as the single point of failure and the privacy protection of model updates, 
blockchain has been extensively used to assist in achieving full decentralization with security \cite{zhao2020privacy,pokhrel2020federated}, which is termed as Blockchain-based FL (BCFL).
This new framework connects participants in FL, i.e., clients, through blockchain network and requires them to complete both FL and blockchain related operations, such as data collection, model training, and block generation \cite{wang2021blockchain}. As for a client in BCFL, a large amount of resources will be consumed in completing the BCFL task, however, making it an impractical job for mobile devices with constrained resources. To address this issue, researchers advocate deploying BCFL at the edge given that edge servers usually have strong computing, communication, and storage capabilities for FL model training and blockchain consensus \cite{zhao2020mobile,hu2021blockchain}.


In this case, the MEC servers are responsible for completing both the BCFL and MEC tasks. 
For the MEC task, the edge server is usually required to allocate the communication resource (e.g., bandwidth) for mobile devices to transfer data, the storage resource for data caching, and the computing resources (e.g., CPU cycle frequency) for computing based on the received data. Similarly, for the BCFL task, the edge server needs to distribute the bandwidth resource for sharing model updates  and reaching consensus among blockchain nodes, the storage resource for saving the copy of blockchain data and local training data, and the CPU resource for FL model training and updating, as well as the generation of new blocks. 
Since both the MEC and BCFL tasks are time-sensitive and could be performed simultaneously at the MEC servers, the servers have to deal with 
the challenge of serving both the lower-layer mobile edge devices and the upper-layer BCFL system without significant delay, 
which makes it a necessity to design reasonable resource allocation schemes for them.


The existing resource allocation mechanisms for BCFL and MEC can be classified into two main categories, where the first type of studies focus on allocating resources to each device for the requested MEC task \cite{sardellitti2015joint,wang2019smart,wan2019task}, and the other is to allocate resources for model training and block generation in the BCFL task \cite{hieu2020resource,li2021blockchain}.
Although the state-of-the-art studies can help edge servers allocate resources to handling the MEC or BCFL tasks, these mechanisms have never considered the resource conflicts when both tasks are running on servers at the same time. 

To fill the gap, we design a resource allocation scheme that allows the edge server to finish both the MEC and BCFL tasks simultaneously and timely.
Specifically, we define the cost as the total energy consumed by the edge server in completing both the MEC and BCFL tasks, and then use the corresponding time requirements as the constraints on the quality of services provided by the edge server.
We can transform the resource allocation problem into a multivariate, multi-constraint, and convex optimization problem. However, solving this optimization problem faces the following challenges: 
1) there are multiple variables since assigning resources to the MEC task means making decisions on resource allocation to each device, where the number of variables increases  with the device quantity; 
and 2) there are too many constraints, making the solution non-trivial.



These two challenges result in the invalidity of traditional optimization methods for multiple variable calculation. Therefore, we design a scheme based on a distributed optimization algorithm, named the alternating direction method of multipliers (ADMM), which is the combination of \textit{dual ascent} and \textit{dual decomposition}, and can determine multiple variables by iterations in a distributed manner \cite{boyd2011distributed}. We adopt the ADMM-based algorithm to solve our proposed resource allocation problem in two scenarios progressively for easy understanding. 
Specifically, we first apply the \textit{modified general ADMM} (MG-ADMM) algorithm for the homogeneous scenario, which distributes resources to each local device equally; then we propose the \textit{modified consensus ADMM} (MC-ADMM) based algorithm to assign resources to devices on demand in the heterogeneous scenario. Besides, by adding additional regularization terms during the variable iteration process, our proposed algorithms can converge in the case of more than two variables, which is confirmed by theoretical analysis. Finally, we conduct extensive experiments to testify the convergence and the effectiveness of our proposed resource allocation schemes.



To the best of our knowledge, we are the  first  to tackle  the  challenge  of  resource  allocation  for   edge servers in the deployment of BCFL at edge. 
And our contributions can be summarized as follows:
\begin{itemize}
    \item We formulate the resource allocation problem of BCFL in MEC as a multivariate and multi-constraint optimization problem, where the solution is the resource allocation scheme for edge servers to simultaneously handle both the MEC and BCFL tasks without delay.
    \item To solve the above optimization problem progressively, we design two algorithms based on MG-ADMM and MC-ADMM for both the homogeneous and heterogeneous scenarios.
    \item To ensure the convergence of algorithms for more than two variables, we add regularization terms in our proposed algorithms based on MG-ADMM and MC-ADMM, and prove the validity with solid theoretical analysis.
    
    \item We conduct numerous experimental evaluations to prove that the optimization solutions are valid and  our proposed resource allocation schemes are effective.
\end{itemize}

The rest of this paper is organized as follows. We introduce the system model and problem formulation in Section \ref{system}. The MG-ADMM based algorithm to solve the optimization problem in the homogeneous scenario and the MC-ADMM based algorithm for the heterogeneous scenario are displayed in  Sections \ref{gadmm} and \ref{cadmm}, respectively. 
Experimental evaluations are presented in Section \ref{exp}. We discuss the related work in Section \ref{rel}. Finally, we conclude this paper in Section \ref{conc}.


\section{System Model and Problem Formulation}\label{system}
In this section, we will discuss the system model from a general perspective, and then we explore both communication models and computing models of our proposed system respectively. We also analyze the cost model of our proposed resource allocation scheme. 

\subsection{System Model Overview}\label{system_overview}

We consider an edge server $S$ connected with $N$ local mobile devices, denoted as $i\in \left\{1,2,3,\cdots, N \right\}$. Local devices are usually lacking of resources, so they may choose to offload their computing tasks to its nearby edge server $S$. In this way, $S$ can provide necessary resources to 
help local devices finish their offloading tasks in the MEC scenario. At the same time, there are multiple edge servers connecting via blockchain network, where they conduct federated learning (FL) to form a blockchain-based FL (BCFL) service provider. In other words, edge servers will be responsible for not only providing computing services to local devices, but also maintaining the BCFL system at the same time. The topology of our considered system is shown in Fig. \ref{fig_sys}.
\begin{figure}[htpt]
\centering
\includegraphics[width=0.45\textwidth]{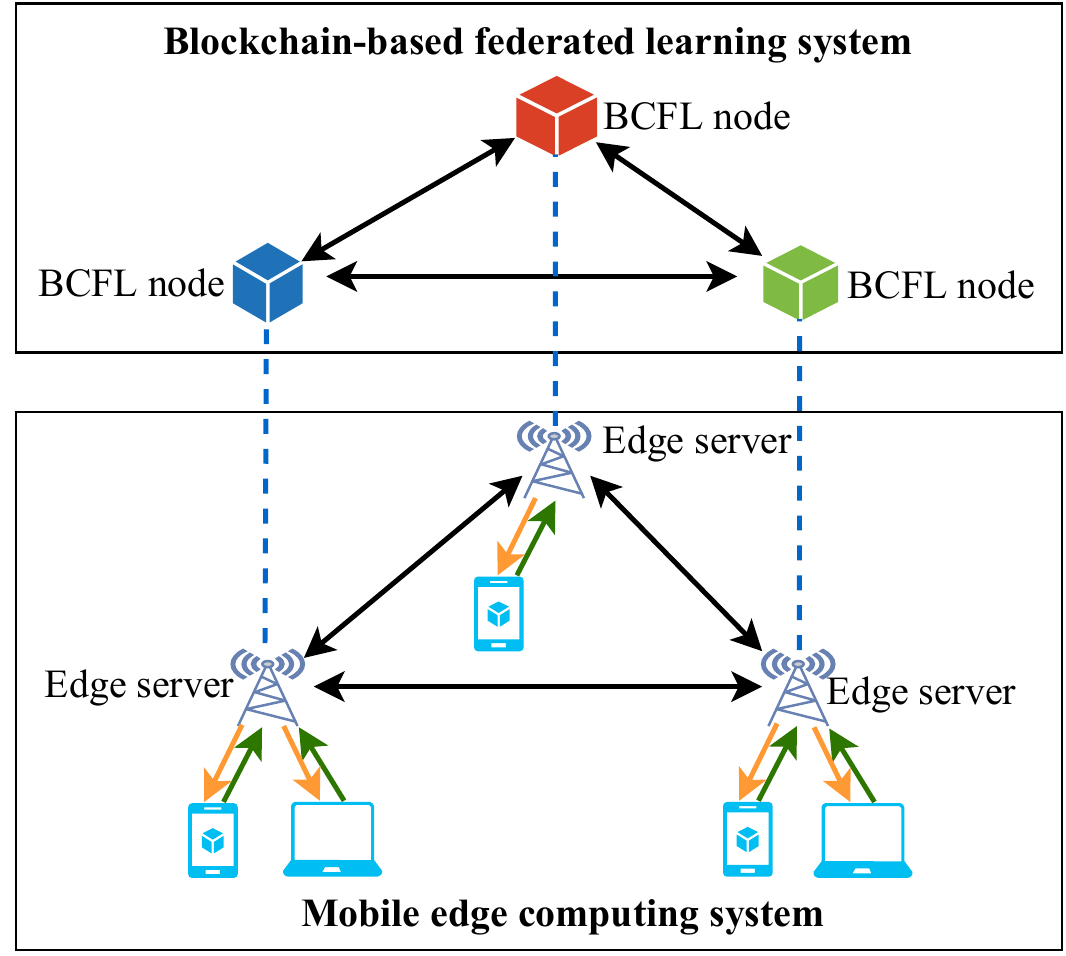}\caption{The topology of BCFL in MEC.}
\label{fig_sys}
\end{figure}

In the MEC, local device $i$ first transmits an offloading request $R_i(D_i, T_i)$ to server $S$, where  $D_i$ is the data size of its task and $T_i$ is the time constraint of this task to be finished. Once $S$ accepts tasks, local devices transmit their data to $S$.
As for the BCFL system, according to \cite{wang2021blockchain}, we consider a fully coupled BCFL which runs FL on a consortium blockchain. First, the clients of FL, i.e., edge servers, train the local models using local data which may be collected from other devices or by themselves, and then they also work as blockchain nodes to generate new blocks which contain the local model updates and the newly updated global model of FL. 
For simplicity, we treat the FL and the blockchain jobs together as the BCFL task, consuming computing, communication, and storage resources. 

Generally, $S$ has limited computing capacity and communication bandwidth, which can be denoted as $F$ and $B$, respectively. Given that the tasks in both the MEC and BCFL are usually time-sensitive, 
simultaneously computing the offloading tasks for lower-layer mobile devices and maintaining the upper-layer BCFL system without any delay require rigorous and optimal resource allocation at edge servers. 

\subsection{Communication Models}
In this subsection, we would like to model the communication resource consumption for finishing the MEC and BCFL tasks, respectively.

\subsubsection{MEC Task} Communications between any device $i$ and the server $S$ include sending offloading request, sending original data and returning computing results. Since the sizes of the offloading request and computing results are smaller than that of the data, we only consider the transmission of original data from devices to the server. 

According to Shannon Bound, the data transmission rate from local device $i$ to edge server $S$ is defined as
\begin{equation}
    r_i^{comm} (\alpha_i)=\alpha_i B \log 2 (1+\frac{P_i G_i}{\delta^2}),\nonumber
\end{equation}
where $\alpha_i \in (0,1)$ represents the percentage of bandwidth allocated to local devices $i$; $B$ is the maximum bandwidth of server $S$; $P_i$ and $G_i$ are the transmission power and gain from $i$ to $S$, respectively; and $\delta$ is the Gaussian noise during the transmission.

Then, we can calculate the time cost of data transmission from device $i$ to server $S$ as 

\begin{equation}
    T_i^{comm} (\alpha_i) = \frac{D_i}{r_i^{comm}(\alpha_i)},\nonumber
\end{equation}
which indicates that the transmission time cost is correlated to the data size of MEC task.

Also, the data transmission will cost a certain amount of energy, which can be calculated by 
\begin{equation}
    E_i^{comm} (\alpha_i)=P_i T_i^{comm}(\alpha_i),\nonumber
\end{equation}
and the total consumption of transmitting the data from all the local devices to the server is calculated as below:
\begin{equation}
    E_{total}^{comm}(\alpha_i)=\sum_{i=1}^{N} E_i^{comm}(\alpha_i).\nonumber
\end{equation}
\subsubsection{BCFL task} The communications during the BCFL task are composed by sharing updates in the blockchain network and conducting blockchain consensus. For simplicity, here we treat the communication in BCFL as a general work process. Let $\alpha_{bcfl}$ denote the percentage of bandwidth distributed to the BCFL task, and let $P_{bcfl}$ and $G_{bcfl}$ represent the transmission power and gain of the BCFL task respectively. Then, we can calculate the data transmission rate in the BCFL task by
\begin{equation}
    r_{bcfl}^{comm} (\alpha_{bcfl})=\alpha_{bcfl} B \log 2 (1+\frac{P_{bcfl} G_{bcfl}}{\delta^2}).\nonumber
\end{equation}

And the time cost of transmission in the BCFL task is
\begin{equation}
    T_{bcfl}^{comm} (\alpha_{bcfl}) = \frac{\widehat{D_{bcfl}}}{r_{bcfl}^{comm} (\alpha_{bcfl})},\nonumber
\end{equation}
where $\widehat{D_{bcfl}}$ is the size of required transmission data in the BCFL task, which is smaller than the size of the training and mining data for the BCFL task, denoted as $D_{bcfl}$, at server $S$.

The energy consumption of the server for conducting the BCFL task can be calculated as
\begin{equation}
     E_{bcfl}^{comm} (\alpha_{bcfl})=P_{bcfl} T_{bcfl}^{comm} (\alpha_{bcfl}).\nonumber
\end{equation}

\subsection{Computing Models}
In this part, we describe the time and energy consumed by the MEC server to process the MEC and BCFL tasks, respectively.

\subsubsection{MEC Task}
Let $f_i\in (0, F)$ be the CPU cycle frequency allocated to the task of device $i$. First, we define the total CPU cycles used for the task of device $i$ as $\mu_i$, and it can be calculated as $\mu_i=D_i d_i$ with $d_i$ denoting the unit CPU cycle frequency required to process one data sample of the MEC task from device $i$. Then, the computing time can be calculated by
\begin{equation}
    T_{i}^{comp}(f_i)=\frac{\mu_i}{f_i}.\nonumber
\end{equation}

According to \cite{burd1996processor}, the energy cost of computing one single task of device $i$ is
\begin{equation}
E_i^{comp}(f_i)=\gamma \mu_i f_i^2,\nonumber
\end{equation}
where $\gamma$ is the parameter correlated to the architecture of the CPU.
Thus, the total energy consumption of computing the MEC tasks for all devices is calculated by
\begin{equation}
    E_{total}^{comp}(f_i)=\sum_{i=1}^N E_i^{comp}(f_i).\nonumber
\end{equation}
\subsubsection{BCFL Task} Similarly, we define $f_{bcfl} \in (0,F)$ as the CPU cycle frequency allocated to the BCFL task. Let $\mu_{bcfl}=D_{bcfl} d_{bcfl}$ denote the total CPU cycles for processing the BCFL task, where $d_{bcfl}$ means the unit CPU cycle used to process one BCFL data sample. 

Then, we can have the time cost of computing the BCFL task:
\begin{equation}
    T_{bcfl}^{comp}(f_{bcfl})=\frac{\mu_{bcfl}}{f_{bcfl}}.\nonumber
\end{equation}

In this way, the energy cost of computing the BCFL task is calculated as:
\begin{equation}
    E_{bcfl}^{comp}(f_{bcfl})=\gamma \mu_{bcfl} f_{bcfl}^2.\nonumber
\end{equation}

\subsection{Cost Model}
We have discussed the energy consumed by the communication and computation of the MEC and the BCFL tasks. Now we can define the cost model of our proposed resource allocation scheme.
Denoting the total energy cost as $U$, based on the above analysis, we know that $U$ is composed of the transmission cost and the computing cost. Then, we have
\begin{align}
    U(\alpha_i, \alpha_{bcfl},f_i,f_{bcfl})=&E_{total}^{comm}(\alpha_i)+E_{bcfl}^{comm}(\alpha_{bcfl})\notag\\
    &+E_{total}^{comp}(f_i)+E_{bcfl}^{comp}(f_{bcfl}).
    \label{utility}
\end{align}

\subsection{Problem Formulation}
The purpose of our resource allocation mechanism is to allow the edge server to handle both the MEC and BCFL tasks satisfying resource and time constraints with the minimum cost. The edge server should make the decisions about how many CPU cycle frequencies and how much bandwidth should be allocated to each task.  Technically, the optimal resource allocation decisions need to consider minimizing the total energy consumption of the edge server. 
Thus, we can formulate the decision making challenge of resource allocation into an optimization problem as below:
\begin{align}
\textbf{P1}: \mathop{\arg\min}_{\alpha_i, \alpha_{bcfl},f_i, f_{bcfl}}&: U \notag\\
\text{s.t.}:\textbf{C1}&:T_{bcfl}^{comm}+T_{bcfl}^{comp}\leq T_{bcfl},\notag   \\
\textbf{C2}&:T_i^{comm}+T_i^{comp}\leq T_i,\notag \\
\textbf{C3}&:\alpha_{bcfl}+\sum_{i=1}^N \alpha_i \leq 1, \notag \\
\textbf{C4}&:f_{bcfl}+\sum_{i=1}^N f_i \leq F,\notag  \\
\textbf{C5}&:D_{bcfl} + \widehat{D_{bcfl}}+ \sum_{i=1}^N D_i \leq D,\notag \\
\textbf{C6}&:f_i,f_{bcfl}\in (0,F), \alpha_i,\alpha_{bcfl}\in (0,1),\notag \\
& \quad   i\in \left\{1,2,\cdots,N\right\}\notag,
\end{align}
where \textbf{C1} and \textbf{C2} guarantee that the server can finish the BCFL task and MEC task on time; \textbf{C3} and \textbf{C4} ensure that the communication and computing resources allocated to each task are not out of the maximum capacities of the server; \textbf{C5} means that the total data size of all the tasks running on the server cannot exceed its maximum storage capacity, denoted as $D$; 
\textbf{C6} clarifies the ranges of all variables. 

\begin{theorem}
The optimization objective function $U(\alpha_i, \alpha_{bcfl},f_i,f_{bcfl})$ is convex.
\label{theorem_con}
\end{theorem}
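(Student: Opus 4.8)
The plan is to exploit the fact that $U$ decomposes as a sum of four terms, each depending on a disjoint block of the decision variables: $E_{total}^{comm}$ on $(\alpha_1,\dots,\alpha_N)$, $E_{bcfl}^{comm}$ on $\alpha_{bcfl}$, $E_{total}^{comp}$ on $(f_1,\dots,f_N)$, and $E_{bcfl}^{comp}$ on $f_{bcfl}$. Since a sum of convex functions with nonnegative weights is convex, it suffices to prove that each summand is convex in its own variables; joint convexity on the full variable space then follows, because a function that is convex in one block of coordinates and constant in the remaining ones is convex jointly.

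First I would substitute the definitions to put each energy term in closed form. For the communication terms, writing the strictly positive channel constant as $c_i = B\log_2\!\big(1+P_iG_i/\delta^2\big)$, one obtains $E_i^{comm}(\alpha_i)=P_iD_i/(c_i\alpha_i)$, i.e. a positive constant times $1/\alpha_i$, and likewise $E_{bcfl}^{comm}(\alpha_{bcfl})$ is a positive constant times $1/\alpha_{bcfl}$. For the computing terms, $E_i^{comp}(f_i)=\gamma\mu_i f_i^2$ and $E_{bcfl}^{comp}(f_{bcfl})=\gamma\mu_{bcfl}f_{bcfl}^2$ are positive constants times a square.

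Then I would verify convexity term by term through second derivatives. On the feasible range $\alpha_i\in(0,1)$ the map $\alpha_i\mapsto 1/\alpha_i$ has second derivative $2/\alpha_i^3>0$, so each communication term is strictly convex; similarly $f\mapsto f^2$ has second derivative $2>0$, so each computing term is convex. Because the coefficients $P_i,D_i,c_i,\gamma,\mu_i,\mu_{bcfl}$ are all positive, every summand is convex. Equivalently, and more cleanly for the write-up, I would assemble the full Hessian of $U$: by the disjoint-block structure all mixed partials vanish, so the Hessian is diagonal with entries $2P_iD_i/(c_i\alpha_i^3)$, $2P_{bcfl}\widehat{D_{bcfl}}/(c_{bcfl}\alpha_{bcfl}^3)$, $2\gamma\mu_i$, and $2\gamma\mu_{bcfl}$, each strictly positive on the feasible set. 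A diagonal matrix with positive diagonal is positive definite, which yields strict convexity at once.

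The computation here is entirely routine, so there is no genuine analytical obstacle; the only points needing care are bookkeeping. I must confirm that every physical constant is positive — in particular that $c_i>0$, which holds because $P_iG_i/\delta^2>0$ forces the logarithm to exceed zero — and I must explicitly note the separability, i.e. the absence of cross terms, that makes the Hessian diagonal, since this is precisely what reduces joint convexity to the per-variable checks rather than requiring a full eigenvalue analysis of a dense matrix.
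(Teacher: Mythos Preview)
Your proposal is correct and follows essentially the same approach as the paper: both arguments compute the Hessian of $U$, observe it is diagonal because the four energy terms depend on disjoint variable blocks, and conclude convexity from the positivity of the diagonal entries (equivalently, the eigenvalues). Your write-up is in fact somewhat more careful than the paper's, since you explicitly justify the separability that forces the Hessian to be diagonal and verify positivity of the channel constants, whereas the paper simply displays the diagonal Hessian and its eigenvalue vector.
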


\begin{proof}
The Hessian Matrix of $U$ respect to $\alpha_i, \alpha_{bcfl}, f_i, f_{bcfl}$ is given by:
\begin{equation}
\setlength{\arraycolsep}{0.5pt}
H_1=\left(                 
  \begin{array}{cccc}   
    \frac{2D_i p_i log(2)}{B\alpha_i^3 \ln(\frac{G_iP_i}{\delta^2}+1)} & 0 & 0 & 0\\  
    0 & \frac{2D_{bcfl} p_{bcfl} log(2)}{B\alpha_{bcfl}^3 \ln(\frac{G_{bcfl}P_{bcfl}}{\delta^2}+1)} & 0 & 0\\  
    0 & 0& 2N\gamma\mu_i & 0\\  
    0 & 0&0&  2\gamma\mu_{bcfl}\\  
 \end{array}
\right).       \nonumber          
\end{equation}

The eigenvalues of matrix $H_1$ are:

\begin{align}
    V_1=\begin{bmatrix}
 \\2\gamma \mu_{bcfl}
 \\2N\gamma \mu_i \\
 \\\frac{2D_{bcfl} p_{bcfl} log(2)}{B\alpha_{bcfl}^3 \ln(\frac{G_{bcfl}P_{bcfl}}{\delta^2}+1)}\\
\\\frac{2D_i p_i log(2)}{B\alpha_i^3 \ln(\frac{G_iP_i}{\delta^2}+1)}\notag\\
\notag
\end{bmatrix}.
\end{align}
It can be seen that all elements in vector $V_1$ are positive. So matrix $H_1$ is a positive definite matrix, and thus we can prove that the optimization objective function $U$ is convex.
\end{proof}


However, it is still hard to solve \textbf{P1} even though the objective function is convex due to the following reasons: 
1) there are multiple variables required to be optimized, and they are not fully correlated; 2) there are multiple constraints, making it harder to find the optimal solutions. 
Hence, we need to design solutions for \textbf{P1}, which will be introduced in the following sections.

\section{MG-ADMM based Solution in the Homogeneous\label{gadmm} Situation}\label{s1}
To present our resource allocation scheme in a progressive way, we will give a benchmark solution of \textbf{P1} in the homogeneous situation where all MEC tasks have the same data size and time requirement.
In this case, we start from a simple case of \textbf{P1} in this section, where an equal distribution strategy is considered to allocate resources to all local devices, including both the bandwidth and CPU cycle frequencies. The equal distribution strategy means that the edge server distributes the communication and computing resources to each local device for MEC tasks in an equal way, that is, $\alpha_i$ and $f_i$ are the same for any arbitrary device $i$. 

According to Boyd \textit{et al.} \cite{boyd2011distributed}, the alternating direction method of multipliers (ADMM), combining \textit{dual ascend} and \textit{dual decomposition}, is designed to solve problems which are multivariate, separable and convex.
We will solve \textbf{P1} with the equal distribution strategy based on the modified general ADMM (MG-ADMM) method, which is derived from the basic form of ADMM. 
In the following, we first introduce MG-ADMM and reformulate the problem based on the MG-ADMM algorithm, and then we explain how we solve \textbf{P1}. 




\subsection{Brief Introduction to MG-ADMM}
First, we introduce G-ADMM as the basis of MG-ADMM.
According to Boyd \textit{et al.} \cite{boyd2011distributed}, G-ADMM tries to solve the following problem:
\begin{align}
    \mathop{\arg\min}_{x, y} \quad f(x)+g(z) \notag \\
    s.t: Ax+Bz=c,\notag
\end{align}
where $x\in \mathbb{R}^n$, $z\in \mathbb{R}^m$, $A\in \mathbb{R}^{p*n}$, $B\in \mathbb{R}^{p*m}$, and $c\in \mathbb{R}^{p}$. Functions $f(x)$ and $g(z)$ are convex,  and $x$ and $z$ are two parameters. The objective of G-ADMM is to find the optimal value $p^*$:
\begin{equation}
    p^*=\inf \left\{f(x)+g(z) | Ax+Bz=c\right\}.
    \nonumber
\end{equation}

Then, we can form the augmented Lagrangian as below:
\begin{align}
    \mathcal{L}_\rho(x, z, y) = &f(x)+g(z)+y^T(Ax+Bz-c)\notag\\
    &+\frac{\rho}{2}\left\|Ax+Bz-c\right\|_2^2.
    \nonumber
\end{align}
where $y$ is the Lagrange multiplier, and $\rho>0$ is the penalty parameter. 

We assume that $k\in \left\{1,2,\cdots,K\right\}$ iterations are required to find the optimal value, and the updates of the iterations are:
\begin{equation}
    x^{k+1}:=\mathop{\arg\min} \mathcal{L}_\rho (x,z^{k}, y^{k}),
    \nonumber
\end{equation}

\begin{equation}
    z^{k+1}:=\mathop{\arg\min} \mathcal{L}_\rho (x^{k+1},z, y^{k}),
    \nonumber
\end{equation}

\begin{equation}
    y^{k+1} := y^k+\rho(Ax^{k+1}+Bz^{k+1}-c).
    \nonumber
\end{equation}

It has been proved that when the following two conditions are satisfied, the ADMM algorithm can converge: 1) The functions $f: \mathbb{R}^n \rightarrow \mathbb{R}\cup(+\infty)$ and $g: \mathbb{R}^m \rightarrow \mathbb{R}\cup(+\infty)$ are closed, proper, and convex; 2) the augmented Lagrangian $\mathcal{L}_\rho(x,z,y)$ has a saddle point.

The above G-ADMM algorithm is the basic form, which is effective to solve the problem which has 2-block (i.e., two separable functions). However, when we need to solve the problem with more than two separable functions, implementing G-ADMM directly can't guarantee convergence.

To handle this issue, He \textit{et al.} \cite{he2015splitting} propose a novel operator splitting method. In this paper, we term it as MG-ADMM. Let's take 3-block separable minimization problem as the example to describe MG-ADMM when the number of blocks is more then 2.

The form of 3-block separable minimization problem is:
\begin{equation}
    \mathop{min}\left\{ f(x)+g(z)+h(y) | Ax+Bz+Ch=b\right\}.
    \nonumber
\end{equation}

Then the Lagrangian function is:
\begin{align}
    \mathcal{L}_\rho(x,z,y,\lambda )=&f(x)+g(z)+h(y)\notag\\
    &+\lambda^T(Ax+Bz+Ch-b)\notag\\
    &+\left\|Ax+Bz+Cy-b\right\|_2^2.
    \nonumber
\end{align}

And the updates of iterations are:
\begin{equation}
    x^{k+1}:=\mathop{\arg\min} \left\{\mathcal{L}_\rho^\beta (x,z^{k}, y^{k}, \lambda^k)\right\},
    \nonumber
\end{equation}

\begin{equation}
    z^{k+1}:=\mathop{\arg\min}\left\{ \mathcal{L}_\rho^\beta (x^{k+1},z, y^{k}, \lambda^k)+\frac{\rho}{2}\beta \left\|B(z-z^k)\right\|_2^2\right\},
    \nonumber
\end{equation}

\begin{equation}
    y^{k+1}:=\mathop{\arg\min}\left\{ \mathcal{L}_\rho^\beta (x^{k+1},z^{k}, y, \lambda^k)+\frac{\rho}{2}\beta \left\|C(y-y^k)\right\|_2^2\right\},
    \nonumber
\end{equation}

\begin{equation}
    \lambda^{k+1} := \lambda^k -\beta(Ax^{k+1}+By^{k+1}+Cz^{k+1}-b),
    \nonumber
\end{equation}
where $\beta>1$ is the penalty parameter.

\subsection{Problem Reformulation based on MG-ADMM}\label{convexity}


In the homogeneous scenario, the edge server distributes the same amount of resources, denoted as $\alpha^*$ and $f^*$, to each local device. 
As for the energy cost of computing, it is the sum of all devices' costs: 
\begin{equation}
    E_{total}^{comp}(f^*)=\sum_{i=1}^N \gamma \mu_i f_i^2=N \gamma \mu_i f^{*2}.\nonumber
\end{equation}
And the communication cost of the MEC tasks is 
\begin{align}
    E_{total}^{comm}(\alpha^*)=&\sum_{i=1}^N P_{i} T_{i}^{comm}\notag=N P_{i} \frac{D_{i}}{\alpha^* B \log 2 (1+\frac{P_i G_i}{\delta^2})}.\notag
\end{align}

Thus, we can rewrite $U$ as:
\begin{align}
U'(\alpha^*, \alpha_{bcfl},f^*,f_{bcfl})
&=E_{total}^{comm}(\alpha^*)+E_{bcfl}^{comm}(\alpha_{bcfl})\notag\\
&+E_{total}^{comp}(f^*)+E_{bcfl}^{comp}(f_{bcfl}).
\label{utility_2}
\end{align}

Besides, the offloading time costs of communication and computing are:
\begin{equation}
    \widehat{T_{i}^{comp}} (f^*)=\frac{\mu_i}{f^*},\nonumber
\end{equation}
\begin{equation}
    \widehat{T_{i}^{comm}} (\alpha^*)=\frac{D_i}{\alpha^* B \log 2 (1+\frac{P_i G_i}{\delta^2})}.\nonumber
\end{equation}

Based on the above analysis, in the case of homogeneous situation, we need to determine four variables, i.e., $\alpha^*,\alpha_{bcfl}, f^*$ and $f_{bcfl}$.  
We can easily prove that $U'$ is convex based on Theorem \ref{theorem_con}. 
So we apply MG-ADMM to optimize 
$U'$ and derive the optimal variables.
In this way, we can reformulate \textbf{P1} as below:
\begin{align}
\textbf{P2:} \mathop{\arg\min}_{\alpha^*, \alpha_{bcfl},f^*, f_{bcfl}}&:   U'\notag\\
\text{s.t.}:\textbf{C1}&,  \textbf{C5} ~\mathrm{in}~ \textbf{P1} 
, \notag   \\
\textbf{C2}&:\widehat{T_i^{comm}}+\widehat{T_i^{comp}}\leq T_i,\notag \\
\textbf{C3}&:\alpha_{bcfl}+N \alpha^* \leq 1, \notag \\
\textbf{C4}&: f_{bcfl}+N f^* \leq F,\notag  \\
\textbf{C6}&: f^*,f_{bcfl}\in(0,F),\notag \\
&\quad \alpha^*,\alpha_{bcfl}\in (0,1),\notag \\
&\quad i\in \left\{1,2,\cdots,N\right\}\notag.
\end{align}

\subsection{Solution based on MG-ADMM}

First, we form the augmented Lagrangian of \textbf{P2} as follows:

\begin{align}
    \mathcal{L}_1&=\mathcal{L}(\alpha^*,\alpha_{bcfl},f^*,f_{bcfl}, \lambda_1,\lambda_2,\lambda_3,\lambda_4,\lambda_5)\notag\\
    &=U'+\lambda_1(T_{bcfl}^{comm}+T_{bcfl}^{comp}-T_{bcfl})\notag\\
    &+\lambda_2(\widehat{T_i^{comm}}+\widehat{T_i^{comp}}-T_i)\notag\\
    &+\lambda_3(\alpha_{bcfl}+N\alpha^*-1)\notag\\
    &+\lambda_4(f_{bcfl}+N f^*-F)\notag\\
    &+\lambda_5(D_{bcfl}+\widehat{D_{bcfl}}+\sum_{i}^N D_i -D)\notag\\
    &+\frac{\rho}{2}\left\|T_{bcfl}^{comm}+T_{bcfl}^{comp}-T_{bcfl}\right\|_2^2\notag\\
    &+\frac{\rho}{2}\left\|\widehat{T_i^{comm}}+\widehat{T_i^{comp}}-T_i\right\|_2^2\notag\\
    &+\frac{\rho}{2}\left\|\alpha_{bcfl}+N\alpha^*-1\right\|_2^2\notag\\
    &+\frac{\rho}{2}\left\|f_{bcfl}+N f^*-1\right\|_2^2\notag\\
    &+\frac{\rho}{2}\left\|D_{bcfl}+\widehat{D_{bcfl}}+\sum_{i}^N D_i -D\right\|_2^2,\notag
    \nonumber
\end{align}
where $\lambda_m>0 $ with $m\in\left\{1,2,3,4,5\right\}$ is the augmented Lagrange multiplier, and $\rho>0$ is the penalty parameter.

\begin{theorem}
The augmented Lagrangian of \textbf{P2}, i.e., $\mathcal{L}_1$, has a saddle point. 
\label{the1}
\end{theorem}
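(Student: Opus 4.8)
The plan is to establish existence of a saddle point $(\alpha^*,\alpha_{bcfl},f^*,f_{bcfl},\lambda_1,\dots,\lambda_5)$ of $\mathcal{L}_1$ by invoking the classical saddle-point theorem for convex programs: if \textbf{P2} is a convex problem satisfying Slater's condition, then strong duality holds, a primal--dual optimal pair exists, and that pair is a saddle point of the Lagrangian. Structurally, $\mathcal{L}_1$ is convex in the primal variables $(\alpha^*,\alpha_{bcfl},f^*,f_{bcfl})$ and affine (hence concave) in the multipliers $\lambda_m$, which is exactly the convex--concave form under which a minimax point can exist; the remaining work is to supply the hypotheses that force the minimax to be attained. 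I would therefore proceed in three stages: confirm convexity of \textbf{P2}, verify Slater's condition, and transfer the resulting saddle point of the ordinary Lagrangian to the augmented Lagrangian $\mathcal{L}_1$.

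First I would verify that \textbf{P2} is convex. The objective $U'$ is convex by Theorem \ref{theorem_con}. Writing each constraint as $c_m\le 0$, where $c_m$ is the residual carried by $\lambda_m$ in $\mathcal{L}_1$, the time constraints \textbf{C1} and \textbf{C2} are sums of terms proportional to $1/\alpha_{bcfl}$, $1/f_{bcfl}$ and $1/\alpha^*$, $1/f^*$, each convex on the positive axis since its second derivative equals $2/x^3>0$; constraints \textbf{C3} and \textbf{C4} are affine; \textbf{C5} contains no decision variable and is merely a feasibility requirement on the data sizes; and \textbf{C6} defines an open box. Hence every $c_m$ is convex, the feasible set is an intersection of convex sets, and (since each $\lambda_m>0$) the weighted sum $\sum_m\lambda_m c_m$ is convex in the primal variables, as are the quadratic penalties $\tfrac{\rho}{2}\|c_m\|_2^2$.

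Next I would verify Slater's condition by exhibiting a strictly feasible point. Assuming the budgets $B$, $F$, $D$ and the deadlines $T_i$, $T_{bcfl}$ are large enough to admit any feasible allocation (otherwise \textbf{P2} is vacuous), I would choose $\alpha^*,\alpha_{bcfl},f^*,f_{bcfl}$ in the interior of the box so that $\alpha_{bcfl}+N\alpha^*<1$ and $f_{bcfl}+Nf^*<F$ hold with strict slack, with the allocated bandwidth and frequency taken large enough to make \textbf{C1} and \textbf{C2} strict. Convexity together with this Slater point yields strong duality and attainment of both optima (the reciprocal terms in \textbf{C1}--\textbf{C2} bound the minimizer away from $0$ while \textbf{C3}--\textbf{C4} and \textbf{C6} bound it above, so the effective feasible set is compact), giving a primal--dual pair satisfying the KKT conditions, i.e.\ a saddle point of the ordinary Lagrangian. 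To pass to $\mathcal{L}_1$, note that the penalty terms $\tfrac{\rho}{2}\|c_m\|_2^2$ are independent of every $\lambda_m$, so the maximization-over-$\lambda$ inequality is unaffected; and for the minimization-over-$x$ inequality each penalty is convex, nonnegative, and vanishes at the optimum by feasibility and complementary slackness, so $\mathcal{L}_1(\cdot,\lambda^*)$ is still convex and still minimized at the same point. Thus the pair is a saddle point of $\mathcal{L}_1$.

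The step I expect to be the main obstacle is the rigorous verification of Slater's condition and primal attainment: it hinges on the implicit assumption that the server's resources strictly suffice for a feasible allocation, and on showing that the open box in \textbf{C6} does not destroy compactness, where the reciprocal blow-up of \textbf{C1}--\textbf{C2} near the boundary is precisely what keeps the minimizer interior. A secondary subtlety is the clean transfer from the ordinary to the augmented Lagrangian under \emph{inequality} rather than equality constraints, which relies on complementary slackness to force the quadratic residual penalties to vanish at the optimum.
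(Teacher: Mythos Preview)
Your proposal is mathematically sound and follows the classical convex-programming route: show that \textbf{P2} is convex, produce a Slater point, invoke strong duality to obtain a primal--dual optimal pair satisfying the KKT conditions, and then argue that the quadratic penalties, being nonnegative and vanishing at a feasible optimum, do not disturb the minimax inequality when passing from the ordinary Lagrangian to $\mathcal{L}_1$. The caveats you flag (feasibility/Slater being an implicit assumption on the problem data, and the transfer step under inequality constraints relying on complementary slackness) are real but routine, and your treatment of them is adequate for this setting.

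The paper, however, does something entirely different. It does \emph{not} appeal to duality at all: it computes the Hessian of $\mathcal{L}_1$ with respect to the four primal variables $(\alpha^*,\alpha_{bcfl},f^*,f_{bcfl})$, lists its diagonal eigenvalues, observes that two are manifestly positive, and then argues that under a side condition one of the remaining eigenvalues can be negative, concluding from this sign pattern that $\mathcal{L}_1$ ``has a saddle point.'' In other words, the paper is arguing for a saddle point in the \emph{calculus} sense (a critical point at which the second-order form is indefinite in the primal directions), not in the Lagrangian minimax sense that the ADMM convergence theory actually requires. Your argument targets the latter directly, is independent of any Hessian computation, and does not need the conditional hypothesis the paper introduces; the paper's argument is shorter but conflates two distinct notions of ``saddle point'' and, as written, even mislabels an indefinite Hessian as ``positive semi-definite.'' So your approach is genuinely different and, for the purpose of justifying ADMM convergence, the more appropriate one.
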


\begin{proof} The Hessian matrix of $\mathcal{L}_1$ is shown in (\ref{hes}).
\begin{small}
 \begin{figure*}[!t] 
 	\centering
 	\begin{equation}	    
    H_2=\begin{pmatrix}
  \frac{N^2\rho}{1-N\alpha_i-\alpha_{bcfl}}+\frac{D_i \log(2)(2\lambda_2+2P_i+\rho)}{\alpha_i^3 B \ln (1+\frac{P_i G_i}{\delta^2})}& 0 & 0 &0 \\
  0& \frac{3\rho }{8\,{\left(1-\alpha _{\mathrm{bcfl}}-N\,\alpha _{i}\right)}}+\frac{D_{bcfl} \log(2)(2\lambda_1+2P_{bcfl}+\rho)}{\alpha_{bcfl}^3 B \ln (1+\frac{P_{bcfl} G_{bcfl}}{\delta^2})} &0  &0 \\
  0&  0& 2N\gamma \mu_i &0 \\
  0& 0 & 0 &2\gamma \mu_{bcfl}
    \end{pmatrix}
    \label{hes}
 	\end{equation}
 \end{figure*}
 \end{small}

Then we calculate the eigenvalues of matrix $H_2$ as
\begin{align}
    V_2=\begin{bmatrix}
 \\\frac{D_i \log(2)(2\lambda_2+2P_i+\rho)}{\alpha_i^3 B \ln (1+\frac{P_i G_i}{\delta^2})}-\frac{N^2\rho}{1-N\alpha_i-\alpha_{bcfl}}
 \\\frac{D_{bcfl} \log(2)(2\lambda_1+2P_{bcfl}+\rho)}{\alpha_{bcfl}^3 B \ln (1+\frac{P_{bcfl} G_{bcfl}}{\delta^2})}-\frac{3\rho }{8(1-\alpha _{bcfl}-N\alpha _{i})}
 \\2\gamma \mu_{bcfl}
\\2N\gamma \mu_i
\end{bmatrix}.\notag
\end{align}

In vector $V_2$, it is clear that $2\gamma \mu_{bcfl}$ and $2N\gamma \mu_i$ are positive. As for $\frac{D_i \log(2)(2\lambda_2+2P_i+\rho)}{\alpha_i^3 B \ln (1+\frac{P_i G_i}{\delta^2})}-\frac{N^2\rho}{1-N\alpha_i-\alpha_{bcfl}}$ and $\frac{D_{bcfl} \log(2)(2\lambda_1+2P_{bcfl}+\rho)}{\alpha_{bcfl}^3 B \ln (1+\frac{P_{bcfl} G_{bcfl}}{\delta^2})}-\frac{3\rho }{8(1-\alpha _{bcfl}-N\alpha _{i})}$, we cannot know whether they are non-negative. If we let $\frac{D_i \log(2)(2\lambda_2+2P_i+\rho)}{\alpha_i^3 B \ln (1+\frac{P_i G_i}{\delta^2})}-\frac{N^2\rho}{1-N\alpha_i-\alpha_{bcfl}}<0$, then we have $\frac{N^2\rho}{1-N\alpha_i-\alpha_{bcfl}}>\frac{D_i \log(2)(2\lambda_2+2P_i+\rho)}{\alpha_i^3 B \ln (1+\frac{P_i G_i}{\delta^2})}$. In other words, if the above condition is satisfied, then we can say that at least one of the elements in vector $V_2$ is negative. In this way, matrix $H_2$ is a positive semi-definite matrix. Thus, $\mathcal{L}_1$ has a saddle point.
\label{aP1}
\end{proof}


Let $k\in \{1,2,\cdots,K\}$ be the iteration round, and the updates of variables can be expressed as:
\begin{align}
    \alpha^{*k+1}:=&\mathop{\arg\min} \mathcal{L}(\alpha^*,\alpha_{bcfl}^k,f^{*k},f_{bcfl}^k,\lambda_1^k,\lambda_2^k,\lambda_3^k,\lambda_4^k,\lambda_5^k),
    \label{eq_a1}
\end{align}
\begin{align}
    \alpha_{bcfl}^{k+1}:=&\mathop{\arg\min}(
    \mathcal{L}(\alpha^{*k+1},\alpha_{bcfl},f^{*k},f_{bcfl}^k,\notag\\
    &\lambda_1^k,\lambda_2^k,\lambda_3^k,\lambda_4^k,\lambda_5^k)+\frac{\rho}{2}\beta \left\|\alpha_{bcfl}-\alpha_{bcfl}^k\right\|_2^2),
    \label{eq_a2}
\end{align}
\begin{align}
    f^{*k+1}:=&\mathop{\arg\min}(
    \mathcal{L}(\alpha^{*k+1},\alpha_{bcfl}^k,f^*,f_{bcfl}^k,\notag\\
    &\lambda_1^k,\lambda_2^k,\lambda_3^k,\lambda_4^k,\lambda_5^k)+\frac{\rho}{2}\beta \left\|N(f^*-f^{*k})\right\|_2^2),
    \label{eq_f1}
\end{align}
\begin{align}
    f_{bcfl}^{k+1}:=&\mathop{\arg\min}(
    \mathcal{L}(\alpha^{*k+1},\alpha_{bcfl}^k,f^{*k},f_{bcfl},\notag\\
    &\lambda_1^k,\lambda_2^k,\lambda_3^k,\lambda_4^k,\lambda_5^k)+\frac{\rho}{2}\beta \left\|f_{bcfl}-f_{bcfl}^k\right\|_2^2),
    \label{eq_f2}
\end{align}
where $\beta>0$ is the penalty parameter.

The updates of augmented Lagrange multipliers are: 
\begin{align}
    \lambda_1^{k+1}:=\lambda_1^k-\beta(T_{bcfl}^{comm}(\alpha_{bcfl}^{k+1})+T_{bcfl}^{comp}(f_{bcfl}^{k+1})-T_{bcfl}),
    \label{l_1}
\end{align}
\begin{align}
    \lambda_2^{k+1}:=\lambda_2^k-\beta(\widehat{T_i^{comm}}(\alpha^{*k+1})+\widehat{T_i^{comp}}(f^{*k+1})-T_i),
    \label{l_2}
\end{align}
\begin{align}
    \lambda_3^{k+1}:=\lambda_3^k-\beta(\alpha_{bcfl}+N\alpha^*-1),
    \label{l_3}
\end{align}
\begin{align}
    \lambda_4^{k+1}:=\lambda_4^k-\beta(f_{bcfl}+Nf^*-F),
    \label{l_4}
\end{align}
\begin{align}
    \lambda_5^{k+1}:=\lambda_5^k-\beta(D_{bcfl} + \widehat{D_{bcfl}}+ {\sum_{i=1}^N D_i}  -D).
    \label{l_5}
\end{align}

Then, we can set the stopping criteria for above iterations: 
\begin{equation}
\left\|\alpha^{*k+1}-\alpha^{*k}\right\|_2^2\leq\psi,
\left\|f^{*k+1}-f^{*k}\right\|_2^2\leq\psi,
\label{eq_s1}
\end{equation}
\begin{equation}
\left\|\alpha_{bcfl}^{k+1}-\alpha_{bcfl}^{k}\right\|_2^2\leq\psi, \left\|\alpha_{bcfl}^{k+1}-\alpha_{bcfl}^{k}\right\|_2^2\leq\psi,
\label{eq_s2}
\end{equation}
where $\psi$ is the predefined threshold \cite{xiong2019cloud}.

Note that (\ref{eq_a1}) to (\ref{eq_f2}) are quadratic optimization problems and can be solved easily. Due to the space limitation, we omit the detailed calculations.


It has been proved that when the following two conditions are satisfied, the MG-ADMM algorithm can converge: 1) the objective function is closed, proper, and convex; and 2) the augmented Lagrangian  has a saddle point.
We have proved that the objective function is convex, and it is also closed and proper. Besides, we have proved that $\mathcal{L}_1$ has a saddle point in \textbf{Theorem} \ref{the1}. Thus, the convergence of $\textbf{P2}$ is guaranteed.

We summarize our proposed solution based on MG-ADMM in Algorithm \ref{al_1}. First, we initialize four variables and five augmented Lagrangian multipliers (Line 1), and then we update the variables and Lagrange multipliers in an iterative process (Lines 2-17). Specifically, we update variables and Lagrange multipliers (Lines 3-11), and calculate the stopping criteria (Line 12). If the termination condition is satisfied, then the objective function is converged (Lines 13-15). In the end, we calculate the optimal value of the objective function, and then all the optimal decisions and the optimal total energy cost are returned (Lines 18-19). 

\begin{algorithm}
\caption{Solution of \textbf{P2} based on MG-ADMM Algorithm} 
\label{al_1}
\begin{algorithmic}[1]
\REQUIRE $P_i$, $D_i$, $N$, $G_i$, $G_{bcfl}$, $\delta$, $\psi$, $G_{bcfl}$, $F$, $\gamma$, $d_{bcfl}$, $\rho$, $P_{bcfl}$, $D_{bcfl}$, $k$, $T_i$, $T_{bcfl}$, $\beta$, $\lambda_1$, $\lambda_2$, $\lambda_3$, $\lambda_4$, $\lambda_5$
\ENSURE $\alpha^*$, $\alpha_{bcfl}$, $f^*$, $f_{bcfl}$, $U'$
\STATE Initialize $\alpha^*$, $\alpha_{bcfl}$, $f^*$, $f_{bcfl}$, $\lambda_1$, $\lambda_2$, $\lambda_3$, $\lambda_4$, $\lambda_5$
\WHILE{Convergence $\neq$ True }
\STATE $\alpha^{*k+1}$ $\leftarrow$ find the optimal value of (\ref{eq_a1})
\STATE $\alpha_{bcfl}^{k+1}$ $\leftarrow$ find the optimal value of (\ref{eq_a2})
\STATE $f^{*k+1}$ $\leftarrow$ find the optimal value of (\ref{eq_f1})
\STATE $f_{bcfl}^{k+1}$ $\leftarrow$ find the optimal value of (\ref{eq_f2})
\STATE $\lambda_1^{k+1}$ $\leftarrow$ update (\ref{l_1})
\STATE $\lambda_2^{k+1}$ $\leftarrow$ update (\ref{l_2})
\STATE $\lambda_3^{k+1}$ $\leftarrow$ update (\ref{l_3})
\STATE $\lambda_4^{k+1}$ $\leftarrow$ update (\ref{l_4})
\STATE $\lambda_5^{k+1}$ $\leftarrow$ update (\ref{l_5})
\STATE Calculate (\ref{eq_s1}) and (\ref{eq_s2})
\IF{(\ref{eq_s1}) and (\ref{eq_s2}) are satisfied}
\STATE Convergence $=$ True
\ENDIF
\STATE $k$  $\leftarrow$  $k+1$
\ENDWHILE
\STATE Calculate $U'$ via (\ref{utility_2})
\RETURN $\alpha^*$, $\alpha_{bcfl}$, $f^*$, $f_{bcfl}$, $U'$
\end{algorithmic}
\end{algorithm}


\section{MC-ADMM based Solution in the Heterogeneous Scenario}\label{cadmm}
In this section, we consider the heterogeneous scenario with diverse MEC requests from local devices. To this end, we need to apply an on-demand resource allocation strategy. 
That is to say, we have to determine the resource allocation decisions for each MEC task, which is more realistic compared to the equal distribution strategy in the homogeneous scenario. Specifically, we calculate $\alpha_i$ and $f_i$ for $i \in \left\{1,2,\cdots,N\right\}$, as well as $\alpha_{bcfl}$ and $f_{bcfl}$. Thus, the optimization problem in this scenario is more practical and complicated.
In the following, we first introduce modified consensus ADMM (MC-ADMM), which is another form of ADMM. Then we formulate \textbf{P1} based on the MC-ADMM algorithm, and the basic idea is to separate the whole optimization task into multiple subtasks which can be resolved in a distributed manner.

\subsection{Brief Introduction to MC-ADMM}
At the beginning, we introduce the C-ADMM, which is one of the ADMM forms. It is designed to solve the following problem:
\begin{align}
    \mathop{\arg\min}_{x} \quad \sum_{i=1}^N f_i(x),
    \nonumber
\end{align}
where $x\in \mathbb{R}^n$ and $f_i: \mathbb{R}^n \rightarrow \mathbb{R}\cup \left\{+\infty\right\}$ are assumed convex. 

The basic idea of C-ADMM is dividing a large scale optimization problem into $N$ subproblems which can be solved in a distributed manner. So, for $\sum_{i=1}^N f_i(x)$, we can rewrite it as:
\begin{align}
    \mathop{\arg\min}_{x} \quad \sum_{i=1}^N f_i(x_i)\notag\\
    s.t.\quad x_i-z=0.\notag\\
    \nonumber
\end{align}
where $z\in \mathbb{R}^n$ is called as auxiliary variable or global variable.

The augmented Lagrangian is:
\begin{align}
    \mathcal{L}(x_1,x_2,\cdots, x_n, z, y)=&\sum_{i=1}^{N}(f_i(x_i)+y_i^T(x_i-z)\notag\\
    &+\frac{\rho}{2}\left\|x_i-z\right\|_2^2),
    \nonumber
\end{align}
where $(x_1,x_2,\cdots,x_n)\in \mathbb{R}^{nN}$.

The updates of parameters are as:
\begin{equation}
    x_i^{k+1}:=\mathop{\arg\min} \left\{\mathcal{L} (f_i(x_i),z^{k}, y_i^{k})\right\},
    \nonumber
\end{equation}
\begin{equation}
    z^{k+1} := \frac{1}{N}\sum_{i=1}^{N}(x_i^{k+1}+\frac{1}{\rho}y_i^k),
    \nonumber
\end{equation}
\begin{equation}
    y_i^{k+1} := y_i^k+\rho(x_i^{k+1}-z^{k+1}).
    \nonumber
\end{equation}

Similar to the MG-ADMM built upon G-ADMM , MC-ADMM is based on C-ADMM by adding regularization terms to the Augmented Lagrangian formula and the variable iteration formulas. Therefore, we omit the detailed formulas of MC-ADMM for brevity.

\subsection{Problem Reformulation based on MC-ADMM}
In the heterogeneous scenario, we have to distribute resources to each  MEC task and the BCFL task, so there are $2N+2$ variables in total. Directly applying the previous MG-ADMM algorithm in this case is not practical since the resource distribution in the heterogeneous situation is much more complicated than the optimization in the homogeneous scenario. Besides, the convergence for $2N+2$ variables in the MG-ADMM algorithm is not guaranteed. Therefore, we resort to the MC-ADMM algorithm,  which can solve the 
large-scale optimization problem in a distributed way. 

Intuitively, allocating the resources to each device is to divide the bandwidth and CPU cycle frequency into $N+1$ parts to find the best decision separately. 
To calculate $\alpha_i$ and $f_i$ for each $i\in\{1,\cdots,N\}$, we first define $\hat{\alpha}$ and $\hat{f}$ as global variables, also called auxiliary variables, 
to assist the distributed optimization. Besides, we have to consider the constraints of \textbf{P1}. For simplicity, we denote the space formed by the constraints related to $\alpha_i$ and $f_i$ (i.e., \textbf{C2}-\textbf{C4} of \textbf{P1}) as $\Omega$, which is the feasible set of local variables $\alpha_i$ and $f_i$. While the other constrains not related to $\alpha_i$ and $f_i$ in \textbf{P1} need to be kept unchanged because they will influence the rest two variables, i.e., $\alpha_{bcfl}$ and $f_{bcfl}$.
Then we can have the reformulated problem as:
\begin{align}
\textbf{P3:} \mathop{\arg\min}_{\alpha_i, \alpha_{bcfl},f_i, f_{bcfl}}&:  U\notag\\
\text{s.t.}: \textbf{C1}&: \alpha_i=\hat{\alpha},\notag   \\
\textbf{C2}&: f_i=\hat{f},\notag \\
\textbf{C3}&: T_{bcfl}^{comm}+T_{bcfl}^{comp} \leq T_{bcfl},\notag \\
\textbf{C4}&: D_{bcfl} + \widehat{D_{bcfl}}+ \sum_{i=1}^N D_i\leq D,\notag \\
\textbf{C5}&:(\alpha_i,f_i)\in \Omega, \alpha_{bcfl}\in(0,1), \notag\\
&\quad f_{bcfl}\in(0,F),\notag \\
&\quad\hat{\alpha}\in (0,1), \hat{f} \in(0,F),\notag\\
&\quad  i\in \left\{1,2,\cdots,N\right\}\notag.
\end{align}

\subsection{Solution based on MC-ADMM}
Here we detail the solution based on MC-ADMM.
First, the augmented Lagrangian form of \textbf{P3} is:
\begin{align}
   \mathcal{L}_2= &\mathcal{L}(\alpha_i, \alpha_{bcfl},f_i,f_{bcfl},\theta_i,\epsilon_i,\eta_1,\eta_2,\hat{\alpha},\hat{f})\notag\\
    &=U+\sum_{i=1}^N \theta_i(\alpha_i-\hat{\alpha})+\sum_{i=1}^N \epsilon_i(f_i-\hat{f})\notag\\
    &+\eta_1(T_{bcfl}^{comm}+T_{bcfl}^{comp}-T_{bcfl})\notag\\
    &+\eta_2(D_{bcfl} + \widehat{D_{bcfl}}+ {\sum_{i=1}^N D_i}  -D)+\frac{\rho}{2}\left\|\alpha_i-\hat{\alpha}\right\|_2^2\notag\\
    &+\frac{\rho}{2}\left\|f_i-\hat{f}\right\|_2^2+\frac{\rho}{2}\left\|T_{bcfl}^{comm}+T_{bcfl}^{comp}-T_{bcfl}\right\|_2^2\notag\\
    &+\left\|D_{bcfl} + \widehat{D_{bcfl}}+ \sum_{i=1}^N D_i-D\right\|_2^2,\notag
    \nonumber
\end{align}
where $\theta_i,\epsilon_i,\eta_1, \eta_2>0$ are augmented Lagrange multipliers.

\begin{theorem}
The augmented Lagrangian of \textbf{P3}, i.e., $\mathcal{L}_2$, has a saddle point. 
\label{the2}
\end{theorem}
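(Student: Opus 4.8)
The plan is to reuse the Hessian-and-eigenvalue route that established Theorem~\ref{the1}, because the convergence guarantee for MC-ADMM again reduces to exhibiting a saddle point of the augmented Lagrangian, here $\mathcal{L}_2$. First I would exploit the separable structure of $\mathcal{L}_2$. The energy term $U$ decomposes into one summand per variable (a term in $\alpha_i$ of the form $P_iD_i/(\alpha_i B\log 2(1+\frac{P_iG_i}{\delta^2}))$, a quadratic term $\gamma\mu_i f_i^2$ in $f_i$, and the two analogous BCFL terms in $\alpha_{bcfl}$ and $f_{bcfl}$); the multiplier terms $\sum_i\theta_i(\alpha_i-\hat{\alpha})$, $\sum_i\epsilon_i(f_i-\hat{f})$, $\eta_1(\cdot)$ and $\eta_2(\cdot)$ are affine and drop out under differentiation; and the remaining curvature is supplied by the quadratic penalties. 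Differentiating twice with respect to the primal block $(\alpha_1,\dots,\alpha_N,f_1,\dots,f_N,\alpha_{bcfl},f_{bcfl},\hat{\alpha},\hat{f})$ produces a Hessian $H_3$ whose eigenvalues I would then extract, exactly as $V_2$ was read off from $H_2$.

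Next I would classify the eigenvalues by their source. The computing terms contribute strictly positive diagonal entries $2\gamma\mu_i+\rho$ (one per $f_i$) and $2\gamma\mu_{bcfl}$ (for $f_{bcfl}$); the communication terms contribute entries of the form $\frac{2D_iP_i\log(2)}{B\alpha_i^3\ln(1+\frac{P_iG_i}{\delta^2})}+\rho$, which stay positive throughout the feasible interior $\alpha_i\in(0,1)$. Arguing as in Theorem~\ref{the1}, I would conclude that $H_3$ is positive (semi-)definite, so $\mathcal{L}_2$ is convex in the primal variables; since $\mathcal{L}_2$ is affine, hence concave, in the multipliers $\theta_i,\epsilon_i,\eta_1,\eta_2$, the resulting convex--concave structure yields the existence of a saddle point. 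Equivalently, one could invoke convexity of $U$ from Theorem~\ref{theorem_con} together with the affine consensus constraints $\alpha_i=\hat{\alpha}$, $f_i=\hat{f}$ and the existence of a strictly feasible (Slater) point to obtain strong duality, from which a saddle point of the Lagrangian follows directly.

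The hard part will be the coupling that the consensus formulation injects, which was absent from the purely diagonal $H_2$. The penalty $\frac{\rho}{2}\|\alpha_i-\hat{\alpha}\|_2^2$ ties every $\alpha_i$ to the global variable $\hat{\alpha}$, creating off-diagonal $-\rho$ entries and, for the group $(\alpha_1,\dots,\alpha_N,\hat{\alpha})$, an ``arrow''-shaped block of the $\left(\begin{smallmatrix} d_i+\rho & -\rho\\ -\rho & N\rho\end{smallmatrix}\right)$ type with $d_i=\partial^2U/\partial\alpha_i^2>0$; an identical block appears for $(f_1,\dots,f_N,\hat{f})$, and the squared BCFL-time penalty further couples $\alpha_{bcfl}$ with $f_{bcfl}$. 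Hence, unlike in Theorem~\ref{the1}, $H_3$ is not diagonal and I cannot certify definiteness by inspecting diagonal entries alone. I would instead compute the relevant Schur complement of each coupling block, e.g.\ $N\rho-\rho^2\sum_i 1/(d_i+\rho)$ for the $\hat{\alpha}$ block, and show it is nonnegative (which holds because $d_i>0$ forces $\rho\sum_i 1/(d_i+\rho)<N$). Verifying the sign of these block minors is the main obstacle; once the coupling blocks are shown to be positive semi-definite, the conclusion that $\mathcal{L}_2$ admits a saddle point follows along the same lines as Theorem~\ref{the1}.
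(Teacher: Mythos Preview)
Your plan shares the paper's Hessian-and-eigenvalue template but diverges in both execution and conclusion. The paper's own proof writes down only a $4\times 4$ diagonal Hessian $H_3$ in $(\alpha_i,\alpha_{bcfl},f_i,f_{bcfl})$, ignoring the global variables $\hat\alpha,\hat f$ and any off-diagonal coupling; it then asserts that two eigenvalues are positive and two negative, calls the matrix ``semi-definite,'' and from this reads off a saddle point---in effect equating indefiniteness of a reduced primal Hessian (a calculus-sense saddle) with the Lagrangian min--max saddle needed for ADMM convergence. Your route is different and more careful: you keep the full primal block including $\hat\alpha,\hat f$, confront the arrow-shaped off-diagonal structure that the consensus penalties $\frac{\rho}{2}\|\alpha_i-\hat\alpha\|_2^2$ and $\frac{\rho}{2}\|f_i-\hat f\|_2^2$ inject, and certify positive semi-definiteness via the Schur complement $N\rho-\rho^2\sum_i 1/(d_i+\rho)\ge 0$ (which indeed holds since $d_i>0$); combining this with affinity in the multipliers---or, as in your alternative, convexity of $U$ plus Slater feasibility and strong duality---yields the convex--concave structure that actually delivers a Lagrangian saddle point. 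So your approach does the block bookkeeping the paper skips and lands on the saddle-point notion the ADMM theory requires, whereas the paper's shortcut trades that rigor for brevity by working with a smaller diagonal Hessian and an indefiniteness argument.
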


\begin{proof}The Hessian matrix of $\mathcal{L}_2$ is shown in (\ref{hes_1}).
 \begin{figure*}[] 
  \centering
  \begin{equation}     
    H_3=\begin{pmatrix}  
    \frac{D_i log(2) (\rho+2NP_i)}{B\alpha_i^3 \log(\frac{G_iP_i}{\delta^2}+1)} & 0 & 0 & 0\\  
    0 & \frac{2D_{bcfl} log(2) (P_{bcfl}+\eta_1)}{B\alpha_{bcfl}^3 \log(\frac{G_{bcfl}P_{bcfl}}{\delta^2}+1)} & 0 & 0\\  
    0 & 0& f_{bcfl}^2 \gamma \mu_{bcfl} + 2 N \gamma \mu_i + \frac{\mu_i \rho}{f_i^3} & 0\\  
    0 & 0&0&  \frac{4\gamma \mu_{bcfl}}{f_{bcfl}^3}\\  
    \end{pmatrix}
    \label{hes_1}
  \end{equation}
 \end{figure*}

Then we calculate the eigenvalues of matrix $H_3$ as
\begin{align}
    V_3=\begin{bmatrix}
 \\f_{bcfl}^2 \gamma \mu_{bcfl} + 2 N \gamma \mu_i + \frac{\mu_i \rho}{f_i^3}
 \\\frac{4\gamma \mu_{bcfl}}{f_{bcfl}^3}
 \\-\frac{D_i log(2) (\rho+2NP_i)}{B\alpha_i^3 \ln(\frac{G_iP_i}{\delta^2}+1)}\\
\\-\frac{2D_{bcfl} log(2) (P_{bcfl}+\eta_1)}{B\alpha_{bcfl}^3 \ln(\frac{G_{bcfl}P_{bcfl}}{\delta^2}+1)}\\
\end{bmatrix}.\notag
\end{align}

Clearly, $f_{bcfl}^2 \gamma \mu_{bcfl} + 2 N \gamma \mu_i + \frac{\mu_i \rho}{f_i^3}>0$ and $\frac{4\gamma \mu_{bcfl}}{f_{bcfl}^3}>0$, while $-\frac{D_i log(2) (\rho+2NP_i)}{B\alpha_i^3 \ln(\frac{G_iP_i}{\delta^2}+1)}<0$ and $-\frac{2D_{bcfl} log(2) (P_{bcfl}+\eta_1)}{B\alpha_{bcfl}^3 \ln(\frac{G_{bcfl}P_{bcfl}}{\delta^2}+1)}<0$. So $H_3$ is a semi-definite matrix, and $\mathcal{L}_2$ has a saddle point.
\label{aP2}
\end{proof}

By applying the method proposed in \cite{he2015splitting}, the updates of local variables (i.e., $\alpha_i$ and $f_i$) are:
\begin{align}
    \left\{\alpha_i^{k+1},f_i^{k+1}\right\}:=\mathop{\arg\min}\mathcal{L}(&\alpha_i, \alpha_{bcfl}^k,f_i,f_{bcfl}^k,\notag\\
    &\theta_i^k,\epsilon_i^k,\eta_1^k,\eta_2^k,\widehat{\alpha^k},\widehat{f^k}),
    \label{l_11}
\end{align}

The updates of $\alpha_{bcfl}$ and $f_{bcfl}$ are:
\begin{align}
    \alpha_{bcfl}^{k+1}:=\mathop{\arg\min}(\mathcal{L}(&\alpha_i^{k+1}, \alpha_{bcfl},f_i^{k+1},f_{bcfl}^{k+1},\notag\\
    &\theta_i^k,\epsilon_i^k,\eta_1^k,\eta_2^k,\widehat{\alpha^k},\widehat{f^k})+\notag\\
    &\frac{\rho}{2}\beta\left\|\alpha_{bcfl}-\alpha_{bcfl}^k\right\|_2^2),
    \label{l_22}
\end{align}
\begin{align}
    f_{bcfl}^{k+1}:=\mathop{\arg\min}(\mathcal{L}(&\alpha_i^{k+1}, \alpha_{bcfl}^{k+1},f_i^{k+1},f_{bcfl},\notag\\
    &\theta_i^k,\epsilon_i^k,\eta_1^k,\eta_2^k,\widehat{\alpha^k},\widehat{f^k})\notag\\
    &+\frac{\rho}{2}\beta\left\|f_{bcfl}-f_{bcfl}^k\right\|_2^2),
    \label{l_33}
\end{align}
where $\rho,\beta>0$ are penalty parameters.

And the updates of global variables are:
\begin{align}
    \widehat{\alpha^{k+1}}:=\frac{1}{N}\sum_{i=1}^N (\alpha_i^{k+1}+\frac{\rho}{2}\theta_i^{k}),
    \label{g_1}
\end{align}
\begin{align}
    \widehat{f^{k+1}}:=\frac{1}{N}\sum_{i=1}^N (f_i^{k+1}+\frac{\rho}{2}\epsilon_i^{k}),
    \label{g_2}
\end{align}

Besides, the updates of augmented Lagrange multipliers are:
\begin{align}
    \theta_i^{k+1}:=\theta_i^{k}+\rho(\alpha_i^{k+1}-\widehat{\alpha^{k+1}}),
    \label{m_1}
\end{align}
\begin{align}
    \epsilon_i^{k+1}:=\epsilon_i^{k}+\rho(f_i^{k+1}-\widehat{f^{k+1}}),
    \label{m_2}
\end{align}
\begin{align}
    \eta_1^{k+1}:=\eta_1^k-\beta(T_{bcfl}^{comm}(\alpha_{bcfl}^{k+1})+T_{bcfl}^{comp}(f_{bcfl}^{k+1})-T_{bcfl}),
    \label{m_3}
\end{align}
\begin{align}
    \eta_2^{k+1}:=\eta_2^k-\beta(D_{bcfl} + \widehat{D_{bcfl}}+ {\sum_{i=1}^N D_i}  -D).
    \label{m_4}
\end{align}

Lastly, the stopping criteria can be set as:
\begin{equation}
\left\|\alpha_i^{k+1}-\widehat{\alpha^{k+1}}\right\|_2^2\leq\psi_{prim},
\left\|f_i^{k+1}-\widehat{f^{k+1}}\right\|_2^2\leq\psi_{prim},
\label{eq_s3}
\end{equation}

\begin{equation}
\left\|\widehat{\alpha^{k+1}}-\widehat{\alpha^{k}}\right\|_2^2\leq\psi_{dual},
\left\|\widehat{f^{k+1}}-\widehat{f^{k}}\right\|_2^2\leq\psi_{dual},
\label{eq_s4}
\end{equation}
where $\psi_{prim}$ and $\psi_{dual}$ are the predefined thresholds \cite{xiong2019cloud}. Besides, (\ref{eq_s2}) should also be included as a stopping criteria.

Even though the forms of $\textbf{P2}$ and $\textbf{P3}$ are different, the proof of the convergence is similar. According to Theorem \ref{theorem_con}, we know that $U$ is convex, and it's clear that $U$ is closed and proper. In addition, the augmented Lagrangian $\mathcal{L}_2$ has a saddle point. So the convergence of $\textbf{P3}$ is guaranteed with the MC-ADMM algorithm.

For reference, we generalize the solution based on MC-ADMM in Algorithm \ref{al_2}. We first initialize local variables, global variables and augmented Lagrangian multipliers (Line 1), and then we calculate the optimal decisions for each MEC task (Lines 2-19). In detail, we keep updating parameters until the objective function is converged (Lines 3-18). Then, we can calculate the optimal total energy cost and return the optimal decisions (Lines 20-21). 

\begin{algorithm}
\caption{Solution of \textbf{P3} based on MC-ADMM Algorithm} 
\label{al_2}
\begin{algorithmic}[1]
\REQUIRE $P_i$, $D_i$, $N$, $G_i$, $G_{bcfl}$, $\delta$, $\psi$, $G_{bcfl}$, $F$, $\gamma$, $d_{bcfl}$, $\rho$, $P_{bcfl}$, $D_{bcfl}$, $k$, $T_i$, $T_{bcfl}$, $\beta$, $\theta_i,\epsilon_i,\eta_1,\eta_2$
\ENSURE $\alpha_i$, $\alpha_{bcfl}$, $f_i$, $f_{bcfl}$, $U$
\STATE Initialize $\alpha_i$, $\alpha_{bcfl}$, $f_i$, $f_{bcfl}$, $\theta_i,\epsilon_i,\eta_1,\eta_2,\hat{\alpha},\hat{f}$

\FOR{ $i \in$ $\left\{1,2,\cdots,N\right\}$}
\WHILE{Convergence $\neq$ True }
\STATE $\alpha_i^{k+1}$, $f_i^{k+1}$ $\leftarrow$ find the optimal values of (\ref{l_11})
\STATE $\widehat{\alpha^{k+1}}$ $\leftarrow$ find the optimal value of (\ref{g_1})
\STATE $\widehat{f^{k+1}}$ $\leftarrow$ find the optimal value of (\ref{g_2})
\STATE $\alpha_{bcfl}^{k+1}$ $\leftarrow$ find the optimal value of (\ref{l_22})
\STATE $f_{bcfl}^{k+1}$ $\leftarrow$ find the optimal value of (\ref{l_33})
\STATE $\theta_i^{k+1}$ $\leftarrow$ update (\ref{m_1})
\STATE $\epsilon_i^{k+1}$ $\leftarrow$ update (\ref{m_2})
\STATE $\eta_1^{k+1}$ $\leftarrow$ update (\ref{m_3})
\STATE $\eta_2^{k+1}$ $\leftarrow$ update (\ref{m_4})
\STATE Calculate (\ref{eq_s2}), (\ref{eq_s3}) and (\ref{eq_s4})
\IF{(\ref{eq_s2}), (\ref{eq_s3}) and (\ref{eq_s4}) are satisfied}
\STATE Convergence $=$ True
\ENDIF
\STATE $k$  $\leftarrow$  $k+1$
\ENDWHILE
\ENDFOR
\STATE Calculate $U$ via (\ref{utility})
\RETURN $\alpha_i$, $\alpha_{bcfl}$, $f_i$, $f_{bcfl}$, $U$
\end{algorithmic}
\end{algorithm}

\section{Experimental Evaluation}\label{exp}
In this section, we design experiments to test the validity and efficiency of our proposed algorithms. We first provide the parameter setting for experiments, then we present and analyze the experimental results. We conduct the experiments using Python 3.8.5 in macOS 11.6 running on Intel i7 processor with 32 GB RAM and 1 TB SSD.

\subsection{Basic Experimental Setting}
We consider a mobile edge computing scenario with one edge server and $10$ local devices. 
For brevity, we provide Table \ref{setting} to detail the basic parameter settings in our experiments. As for the settings of some certain experiments, we will clarify them later. For the augmented Lagrange multipliers, we set them as 1.0 at the beginning.


\begin{table}[H]
\centering
\caption{Basic Experimental Setting}
\setlength{\tabcolsep}{1mm}{
\begin{tabular}{|c|c|c|c|c|}
\hline
$N=10$     & $\beta=0.5$ & $G_i=10$    & $d_i=2$      & $D_i=10$       \\ \hline
$k=100$    & $P_i=2$     & $G_{bcfl}=10$ & $d_{bcfl}=2$ & $D_{bcfl}=10$  \\ \hline
$\rho=0.5$ & $P_{bcfl}=2$  & $F=1000$    & $\delta=0.1$ & $\gamma=0.001$ \\ \hline
$T_i=10$ & $T_{bcfl}=50$  & $\psi=10^{-3}$    & $\psi_{prim}=10^{-3}$ & $\psi_{dual}=10^{-3}$\\ \hline
\end{tabular}}
\label{setting}
\end{table}

Note that we have also tested our proposed algorithms with other parameter settings, while it can be found that the values of data and time related parameters would not affect the changing trends of the experimental results. So, we only report the results for the above parameter settings. Besides, to avoid the statistical bias, we report the averaged results for ten rounds of repeated experiments.



\subsection{Experimental Results}
We design two parts of the experiments: the evaluation of the MG-ADMM based algorithm and the evaluation of the MC-ADMM based algorithm. These two algorithms are designed for different scenarios, i.e., homogeneous and heterogeneous. In the homogeneous scenario, we assume that all the parameters of each MEC task are the same, while in the heterogeneous scenario, we treat each MEC task individually. 
Due to the limitation of space, we only present partial experimental results with importance in this section.

\subsubsection{The Evaluation of the MG-ADMM based Algorithm} We first evaluate the MG-ADMM based algorithm solving \textbf{P2} in the homogeneous scenario, and then we analyze the impacts of the data sizes of both the MEC and the BCFL tasks on the optimal decisions in our resource allocation scheme.

For comparison, we design a \textit{random allocation strategy}, which assigns the bandwidth and CPU cycle frequencies to the MEC and the BCFL tasks in a random way. And we also consider a \textit{fixed allocation strategy}, which determines the resource allocation with fixed values at the beginning. Besides, 
we use the G-ADMM algorithm by setting $\alpha_{bcfl}$ and $f_{bcfl}$ to fixed values as another benchmark solution since setting other variables as constants cannot return converged results.
Via comparing the proposed MG-ADMM based algorithm with these three solutions, we plot the experimental results in Fig. \ref{fig_1}(a). We can see that the MG-ADMM based algorithm can converge after about 80 rounds of iteration, while random strategy cannot converge. In addition, the random and fixed strategies and G-ADMM can only get the total energy cost larger than that of the MG-ADMM algorithm. The results show that the MG-ADMM based algorithm outperforms the other three strategies.

\begin{figure}[h]
\centering
\subfigure[Scheme comparison.]{
\includegraphics[width=0.23\textwidth]{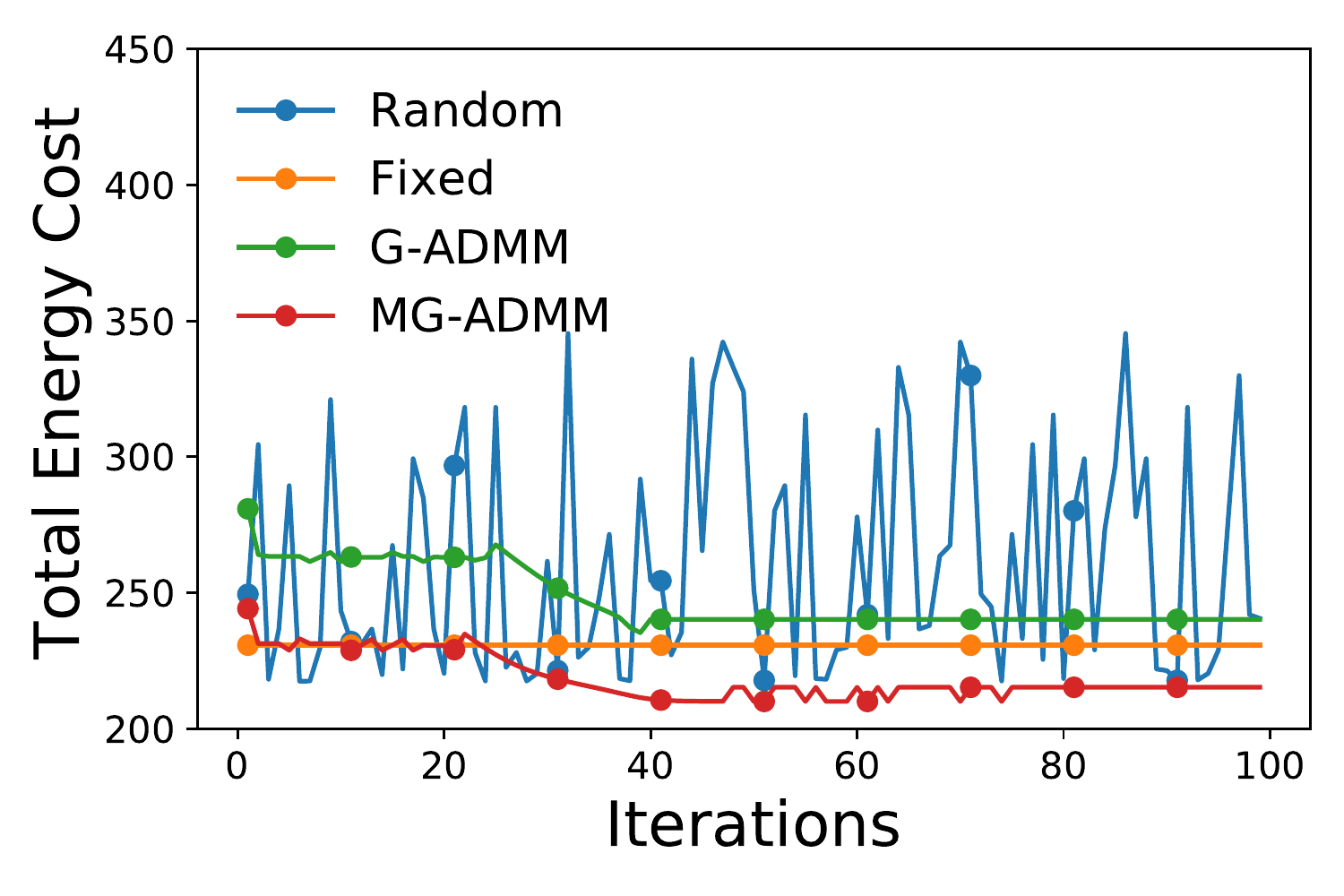}}
\subfigure[Penalty parameter $\rho$.]{
\includegraphics[width=0.23\textwidth]{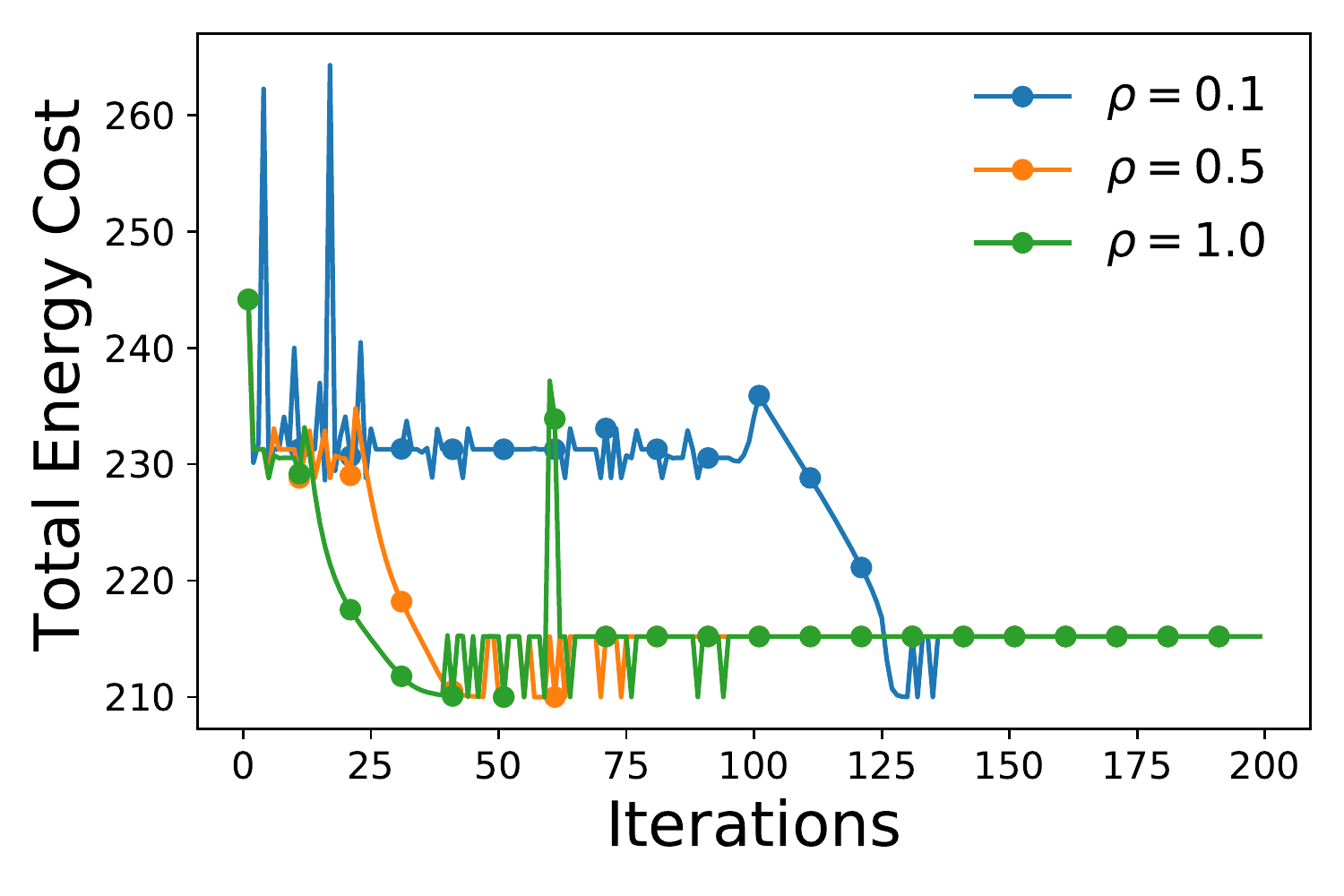}}
\subfigure[Penalty parameter $\beta$.]{
\includegraphics[width=0.23\textwidth]{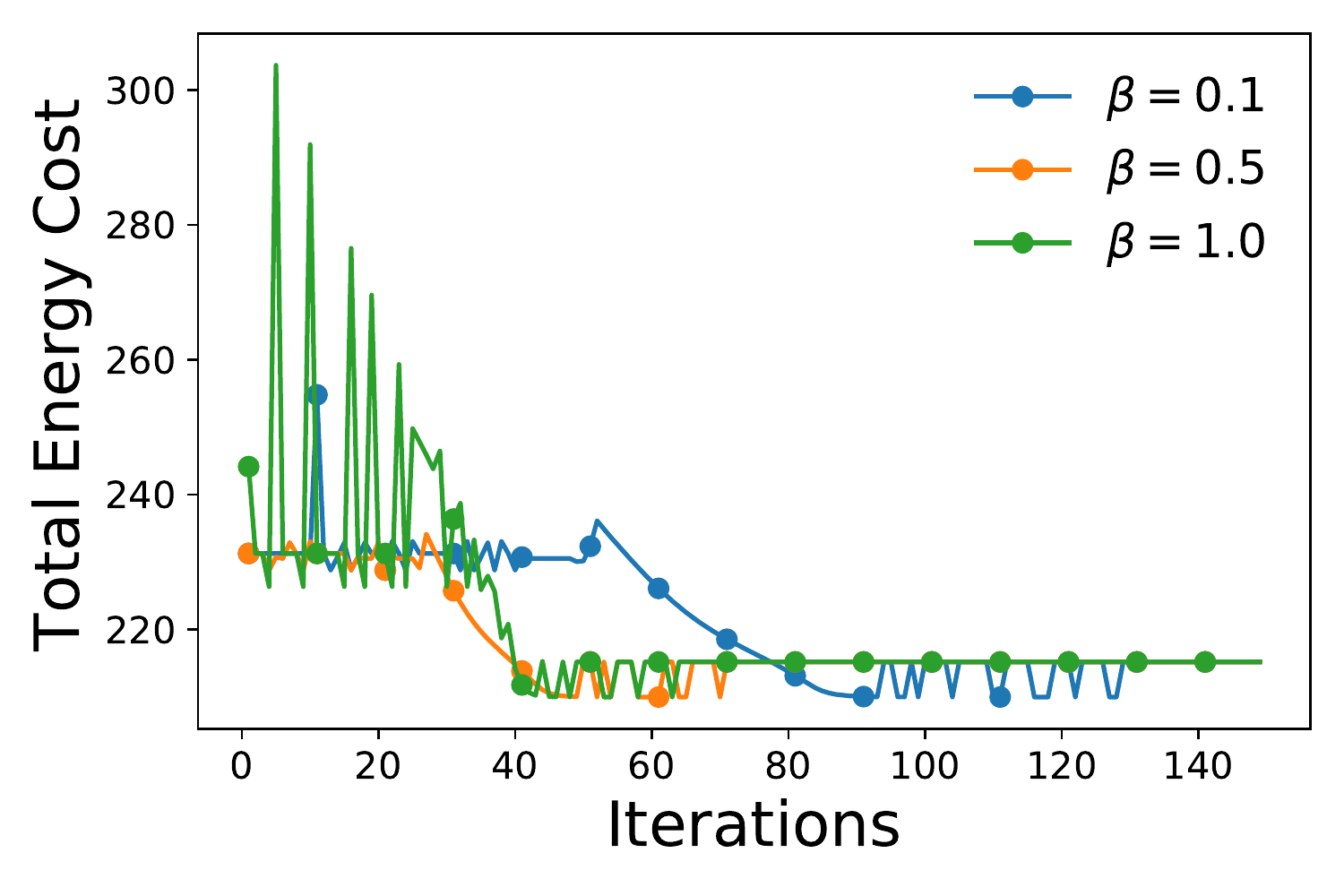}}
\subfigure[The number of devices $N$.]{
\includegraphics[width=0.23\textwidth]{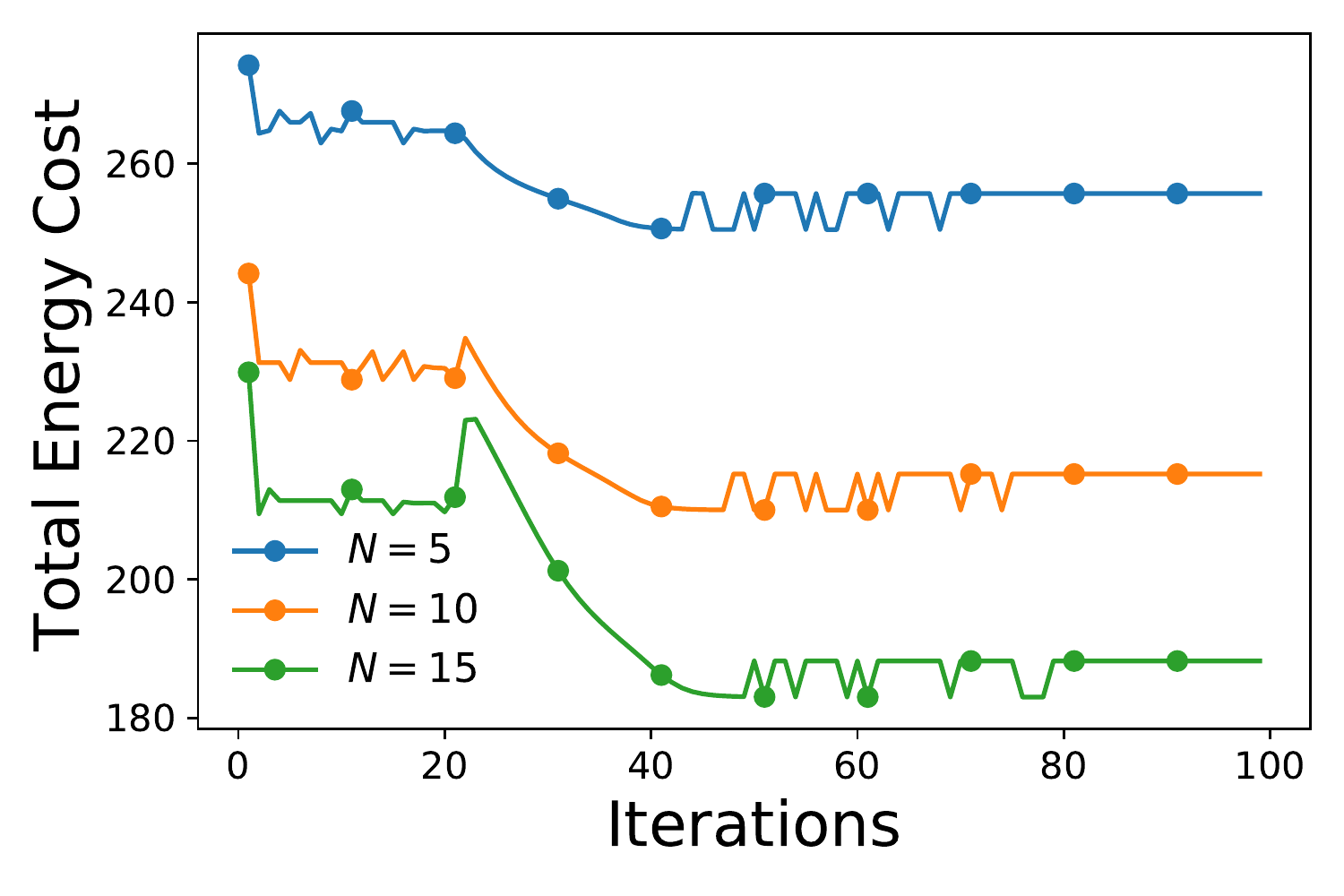}}
\caption{The convergence of the MMG-ADMM based algorithm.}
\label{fig_1}
\end{figure}

\begin{figure}[h]
\centering
\subfigure[Bandwidth.]{
\includegraphics[width=0.23\textwidth]{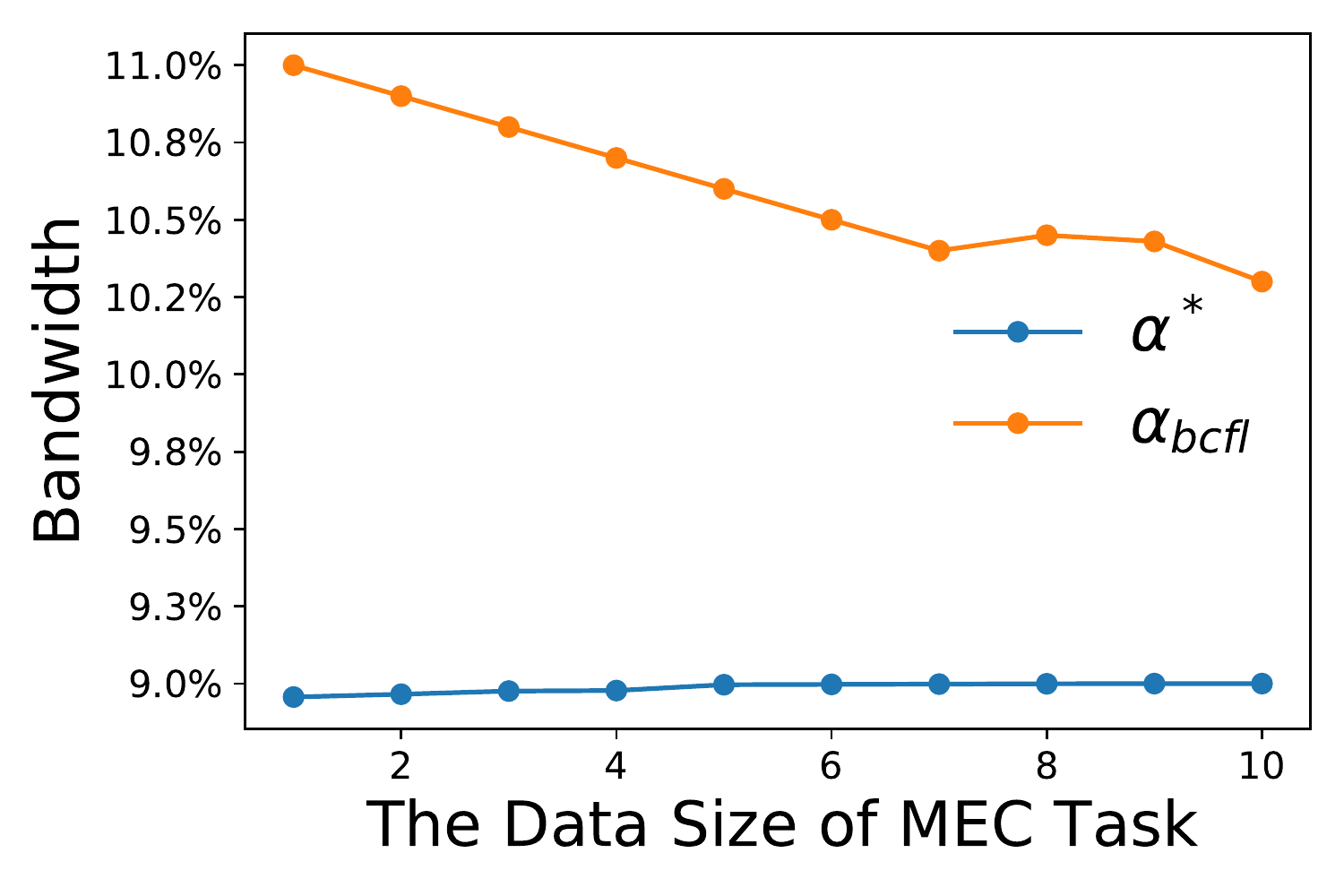}}
\subfigure[CPU cycle frequencies.]{
\includegraphics[width=0.23\textwidth]{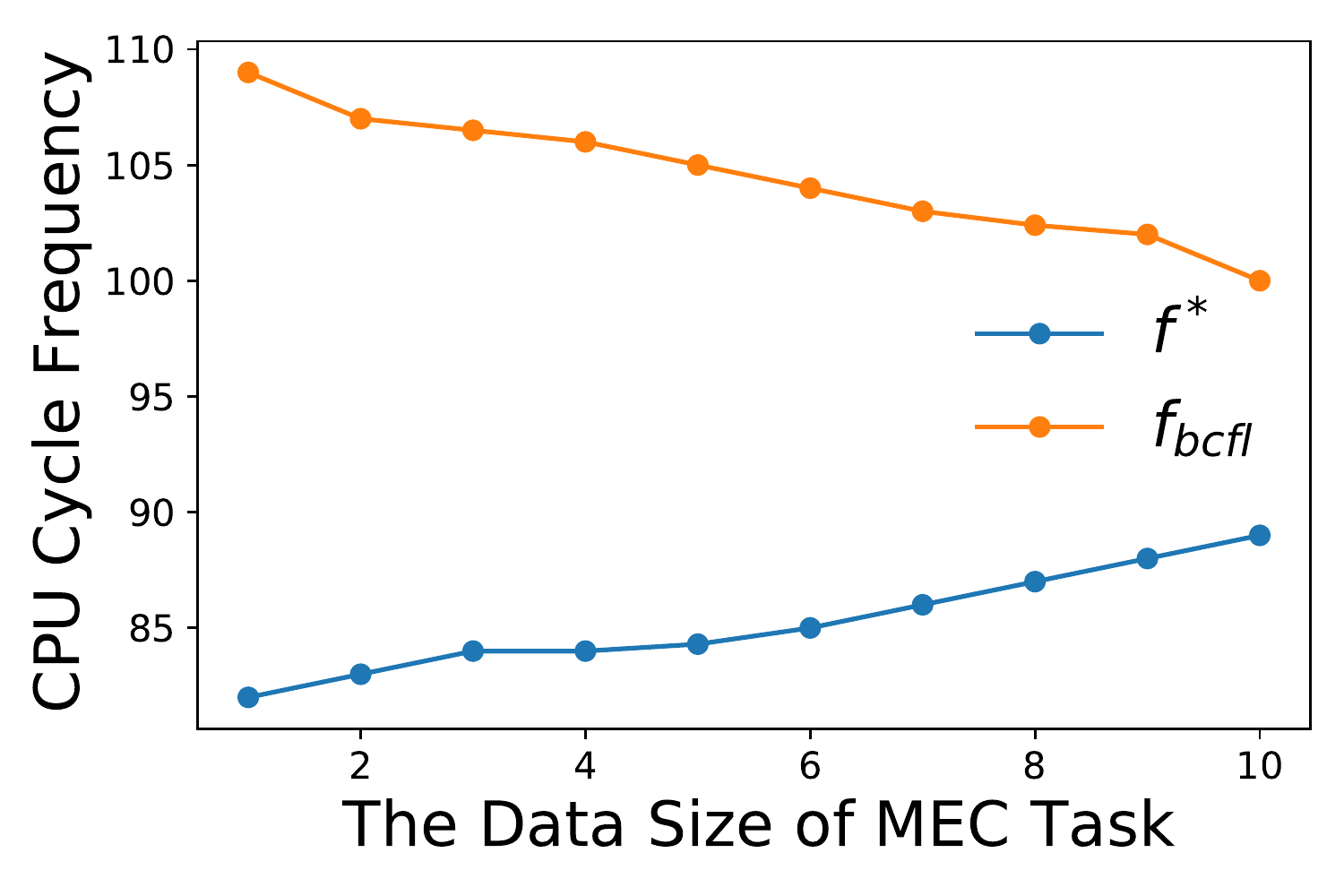}}
\subfigure[Bandwidth.]{
\includegraphics[width=0.23\textwidth]{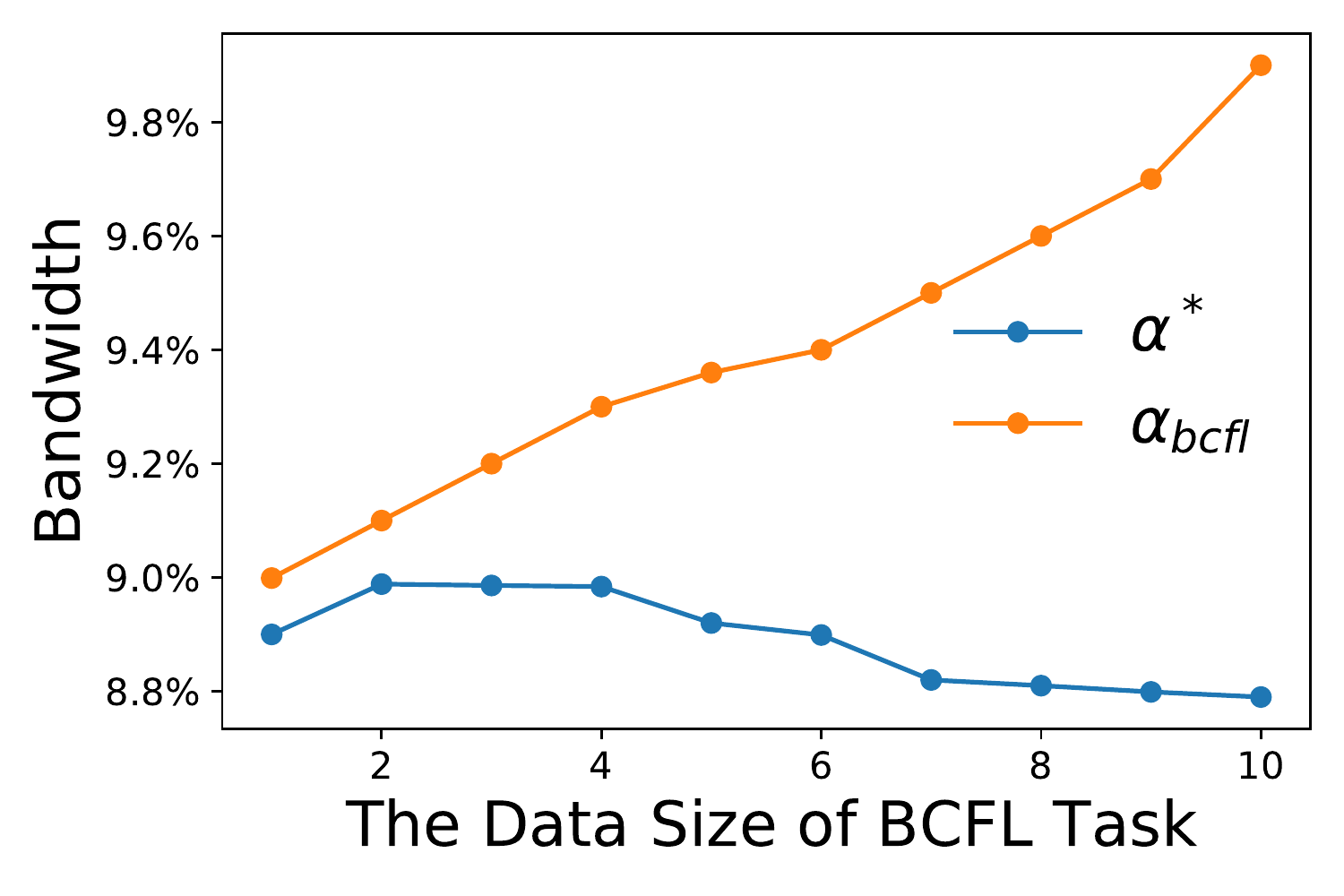}}
\subfigure[CPU cycle frequencies.]{
\includegraphics[width=0.23\textwidth]{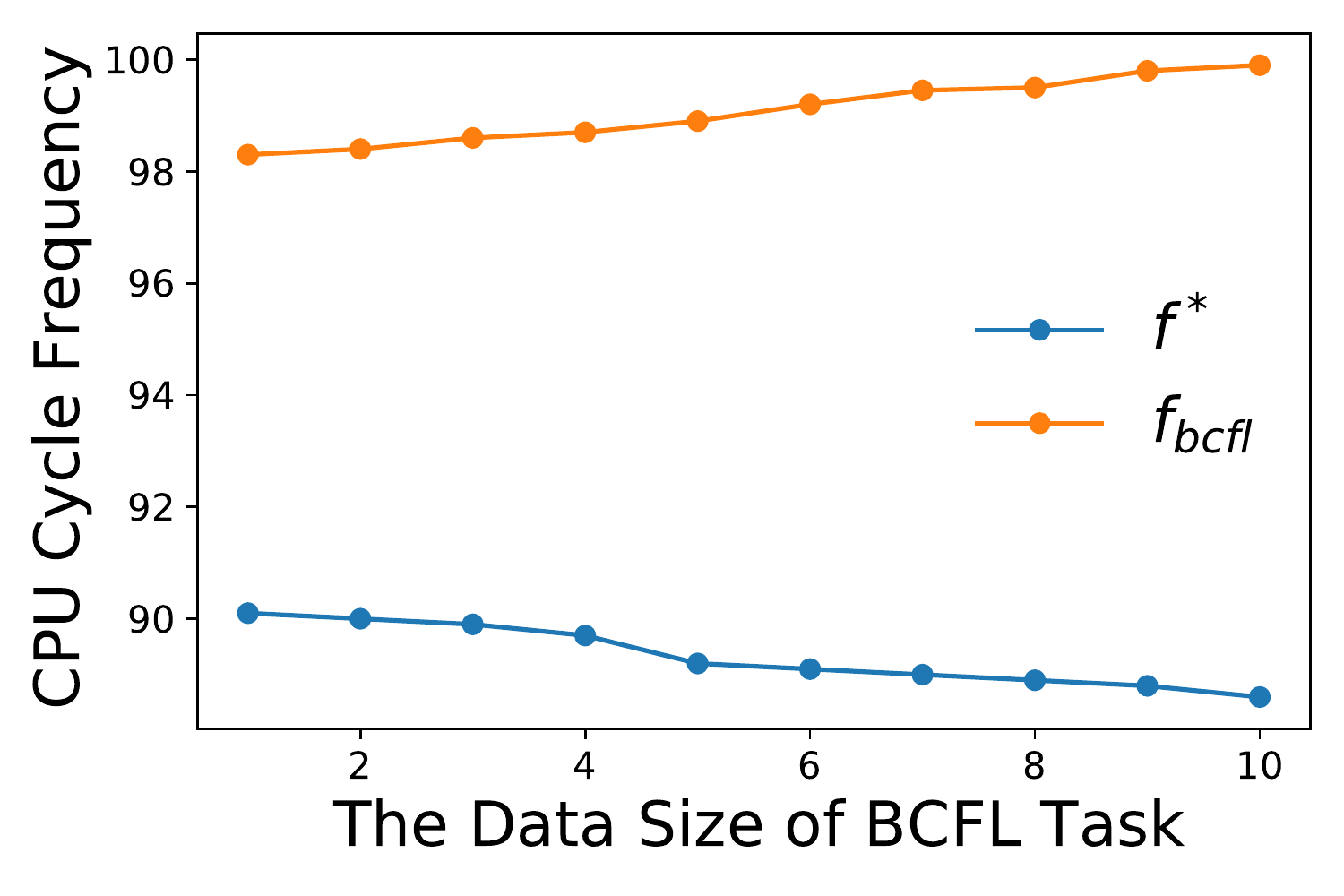}}
\caption{The optimal resource  allocation decisions based on the MG-ADMM algorithm.}
\label{fig_2}
\end{figure}

As the penalty parameters $\rho$ and $\beta$ will influence the convergence speed of the MG-ADMM based algorithm, we set the values of $\rho$ as $\left\{0.1,0.5,1.0\right\}$, and maintain other parameters unchanged. The results in Fig. \ref{fig_1}(b) show that the faster convergence speed comes for a  larger $\rho$. Similarly, we can see from Fig. \ref{fig_1}(c) that the convergence speed will be faster when $\beta$ is larger. 
The reason is that the penalty parameters control the length of the step in each iteration and larger penalty parameters will lead to the greater length of each step, so the convergence speed will be faster. 

To testify the impact of the number of local devices on the convergence of MG-ADMM, we plot experimental results in Fig. \ref{fig_1}(d). We can see that the convergence speed will be slower and the optimal value will be larger when the number of local devices increases, which indicates that it will influence not only the convergence speed but also the optimal value of the total energy cost.  This is because with more devices involved in the MEC tasks, the edge server will cost more energy to work for the tasks, and the optimization problem will be more difficult, so more time will be cost to converge.

In the homogeneous scenario, both the bandwidth and CPU cycle frequencies assigned to each local device are the same, so we only need to calculate four variables, i.e., $\alpha^*, \alpha_{bcfl}, f^*, f_{bcfl}$ for the optimal allocation decisions. In Fig. \ref{fig_2}, for different data sizes of the MEC tasks ($D_i$) and the BCFL task ($D_{bcfl}$), the results show that the data sizes of tasks significantly influence the resource allocation decisions. In Figs. \ref{fig_2}(a) and (b), it can be seen that the larger the data size of each MEC task, the more communication and computing resources allocated to devices and the fewer resources allocated to the BCFL task. Similarly, we can conclude from Figs. \ref{fig_2}(c) and (d) that more resources will be distributed to the BCFL task and fewer resources will be assigned to the MEC tasks if the data size of the BCFL task is larger. The results match the intuition that the larger data size of a task requires more resources in communication and computing.


\begin{figure}[h]
\centering
\subfigure[Strategies comparison.]{
\includegraphics[width=0.23\textwidth]{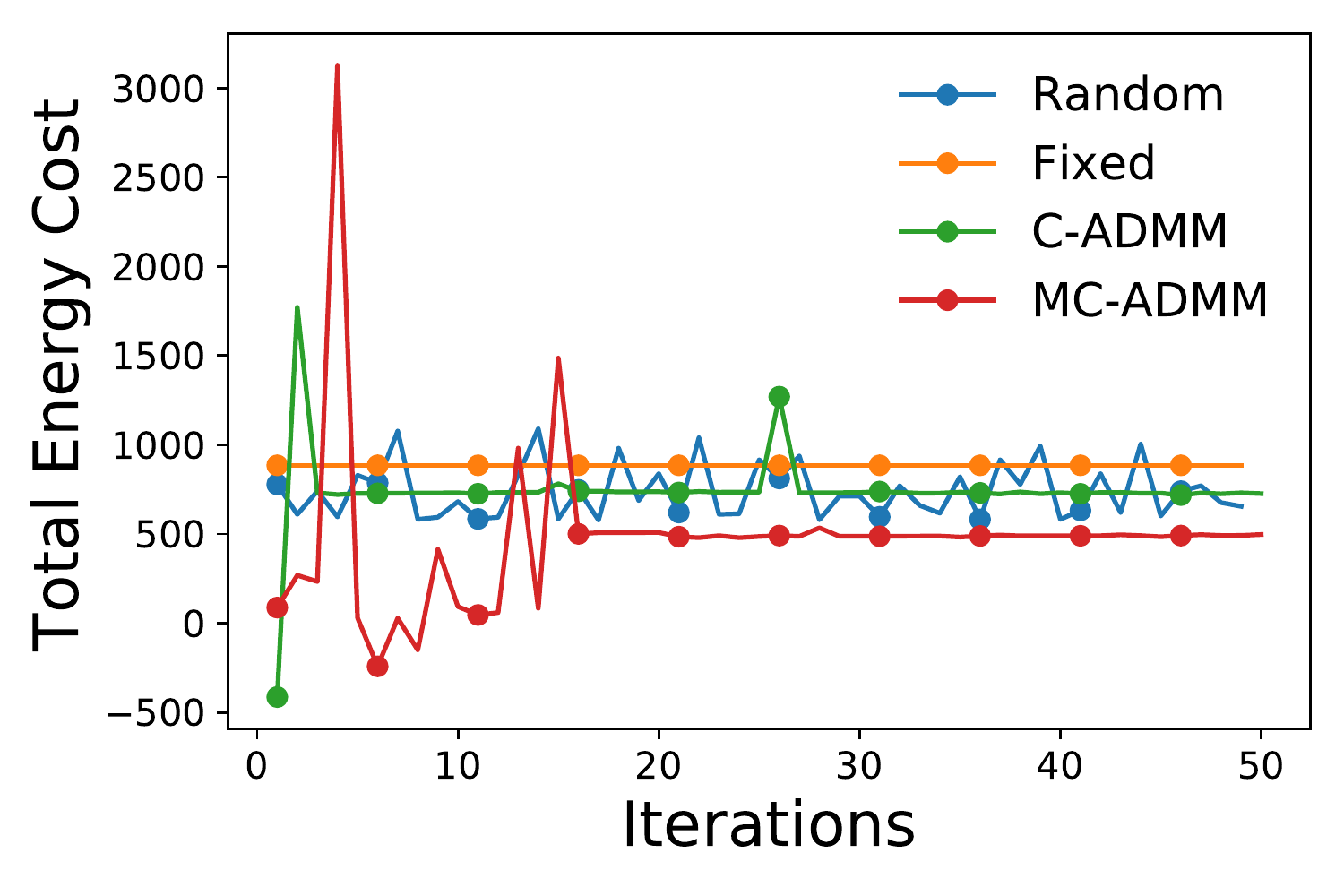}}
\subfigure[The penalty parameter $\rho$.]{
\includegraphics[width=0.23\textwidth]{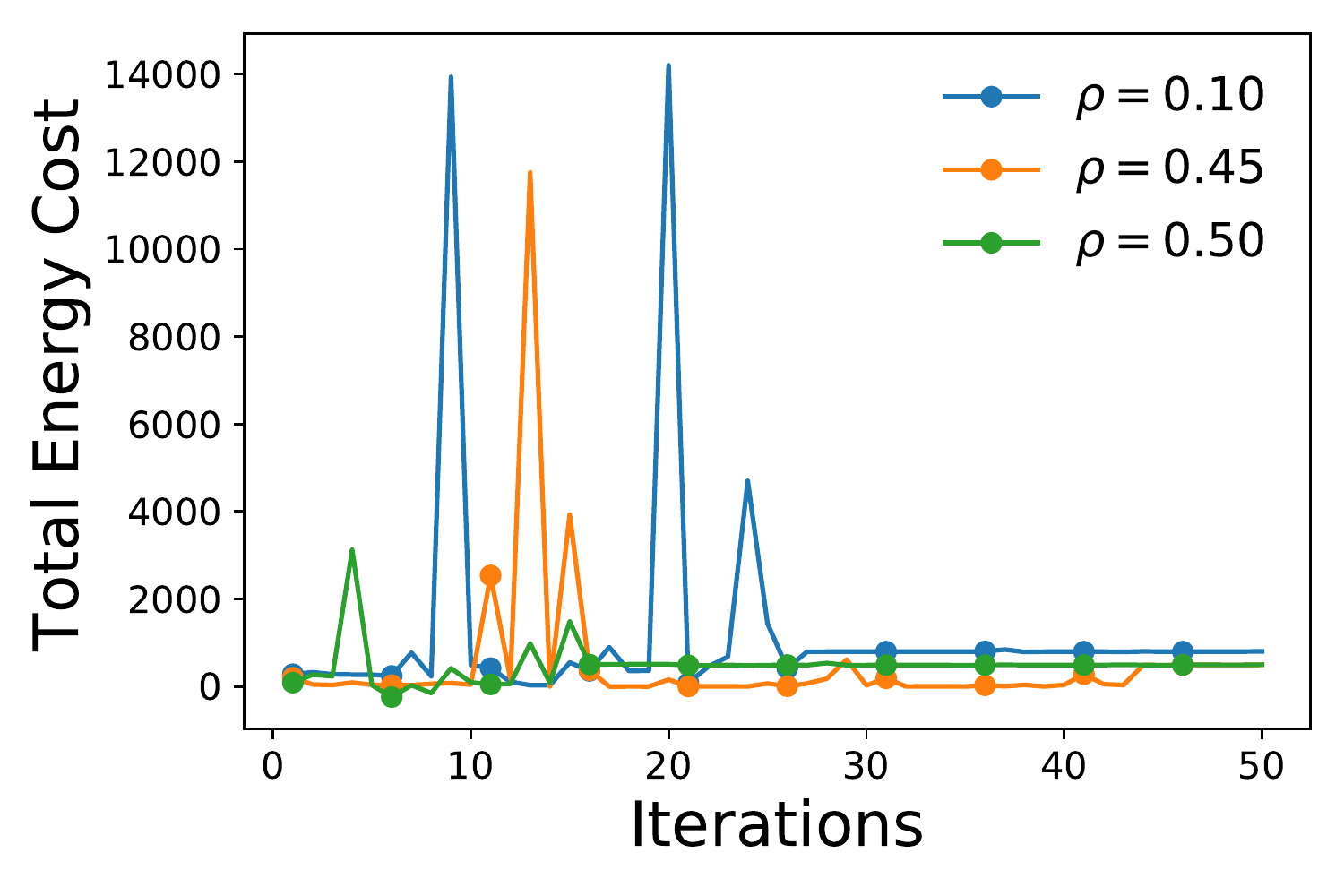}}
\subfigure[The penalty parameter $\beta$.]{
\includegraphics[width=0.23\textwidth]{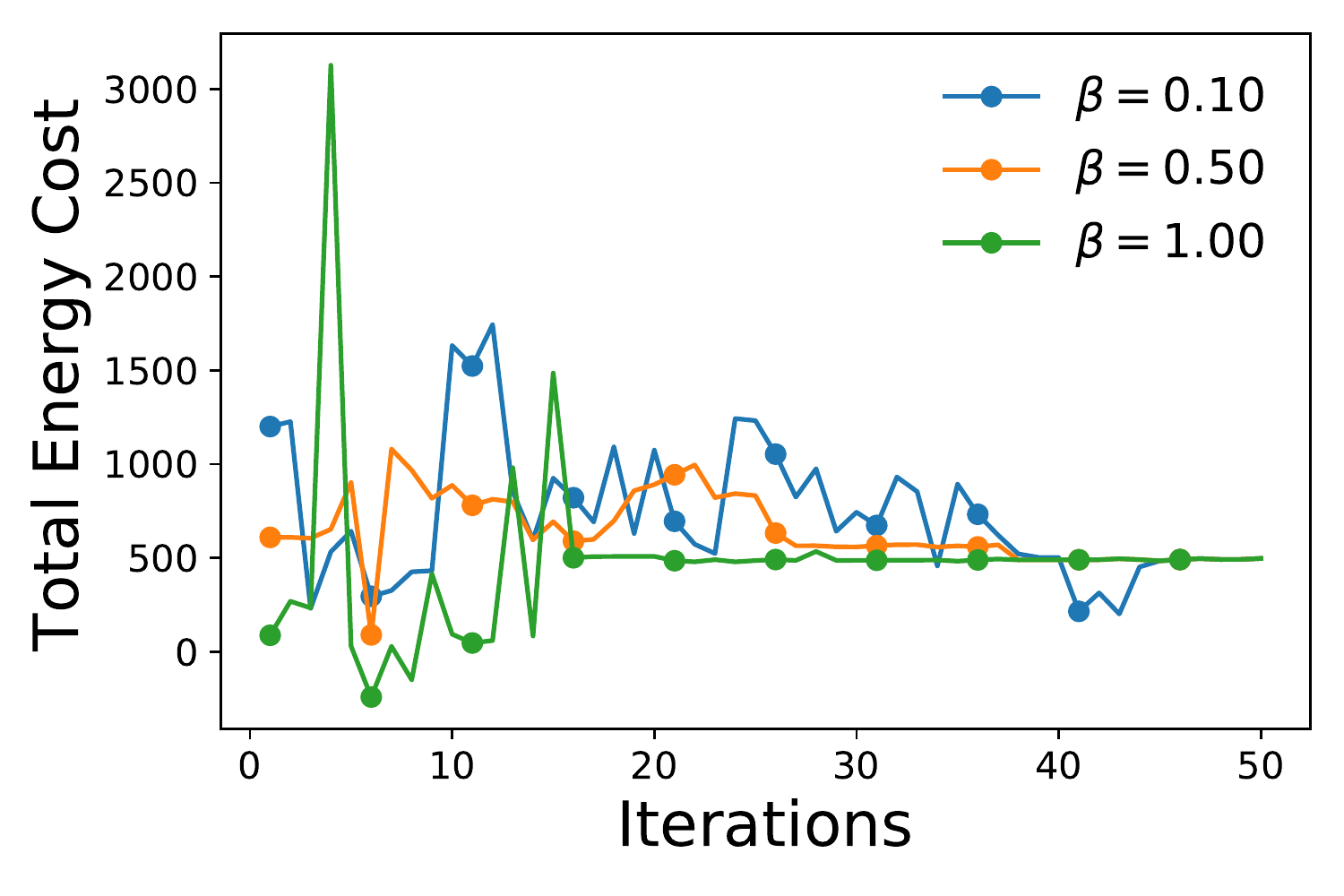}}
\subfigure[The number of devices $N$.]{
\includegraphics[width=0.23\textwidth]{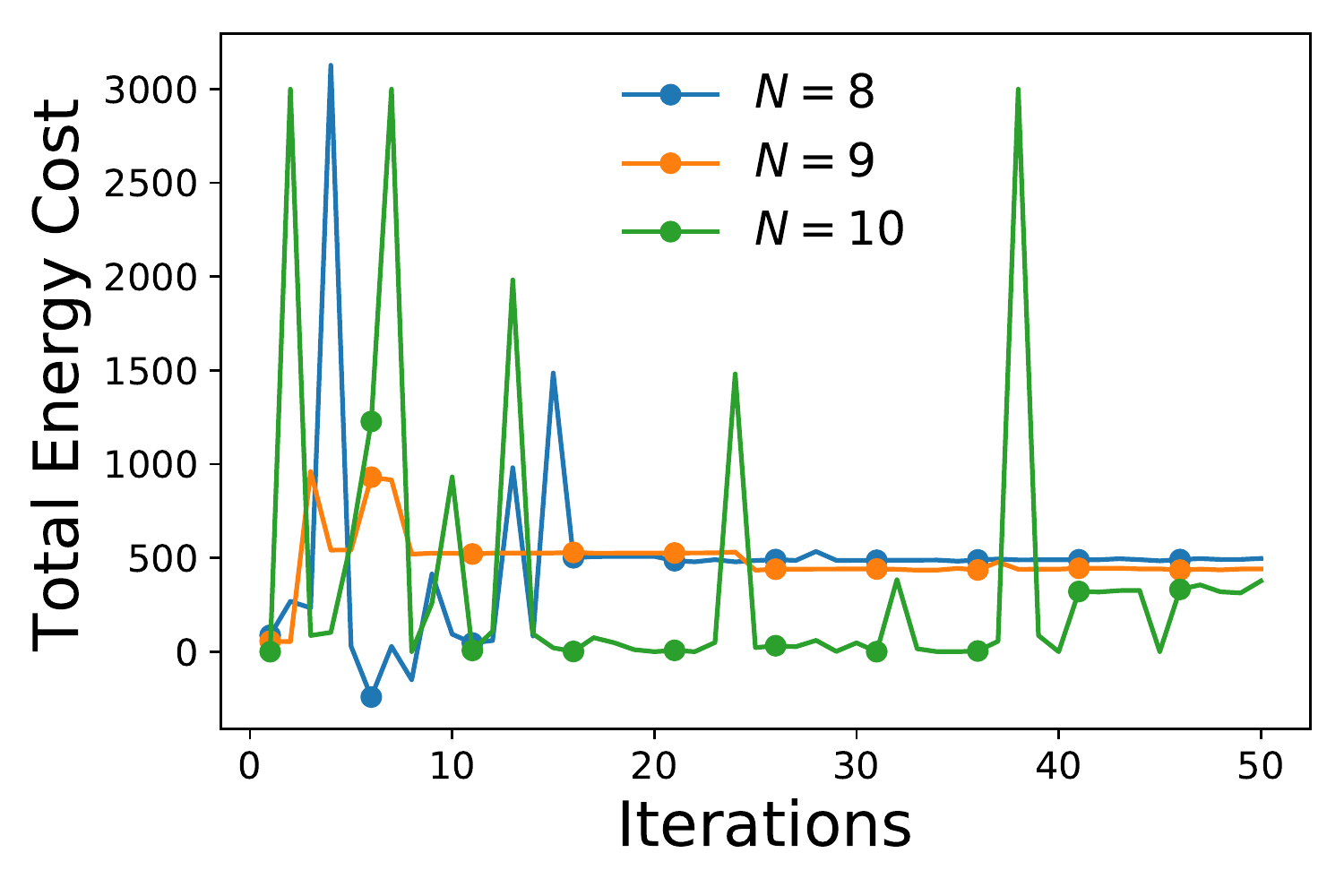}}
\caption{The convergence of the MC-ADMM based algorithm.}
\label{fig_3}
\end{figure}








\subsubsection{The Evaluation of MC-ADMM based Algorithm} 

In this part, the experiments are designed to evaluate the optimization objective of \textbf{P3} from the perspective of convergence and reveal the relationship between the data sizes of tasks and the optimal resource allocation decisions, i.e., the optimization variables in \textbf{P3}. The parameter setting is  $D_i\in\left\{1,2,3,\cdots,10\right\}$ and $T_i\in\left\{1,2,3,\cdots,10\right\}$ with $N=10$, $\rho=0.5$ and $\beta=1.0$, while others are the same with the above experiments.

First, we compare our proposed MC-ADMM based algorithm with the above-mentioned random allocation strategy and fixed allocation strategy and C-ADMM in terms of checking the convergence speed of each strategy. Similar to the setting mentioned above regarding G-ADMM, C-ADMM is implemented by setting $\alpha_{bcfl}$ and $f_{bcfl}$ as constants. 
The results are reported in Fig. \ref{fig_3}(a), which shows that our proposed algorithm performs well in solving \textbf{P3} since it can converge and achieve a lower stable value of the total energy cost than the other three strategies.

Then, we test how penalty parameters $\rho\in\left\{0.10,0.45,0.50\right\}$ and $\beta\in\left\{0.10,0.50,1.00\right\}$ influence the convergence speed. From Figs. \ref{fig_3}(b) and (c), we can know that the larger penalties will cause faster convergence speed. 
What's more, We find that the value of $\rho$ cannot be too large, or the algorithm would not converge.
We also test the influence of the number of local devices ($N\in\left\{8,9,10\right\}$) with the results in Fig. \ref{fig_3}(d) showing that more local devices will lead to more cost and slower convergence speed.

By comparing Figs. \ref{fig_1} and \ref{fig_3}, it can be seen that MG-ADMM requires about 80 rounds to converge, while MC-ADMM only needs less than 50 rounds to reach the stable value, which indicates that the distributed algorithm is more effective.

\begin{figure}[h]
\centering
\subfigure[Bandwidth.]{
\includegraphics[width=0.23\textwidth]{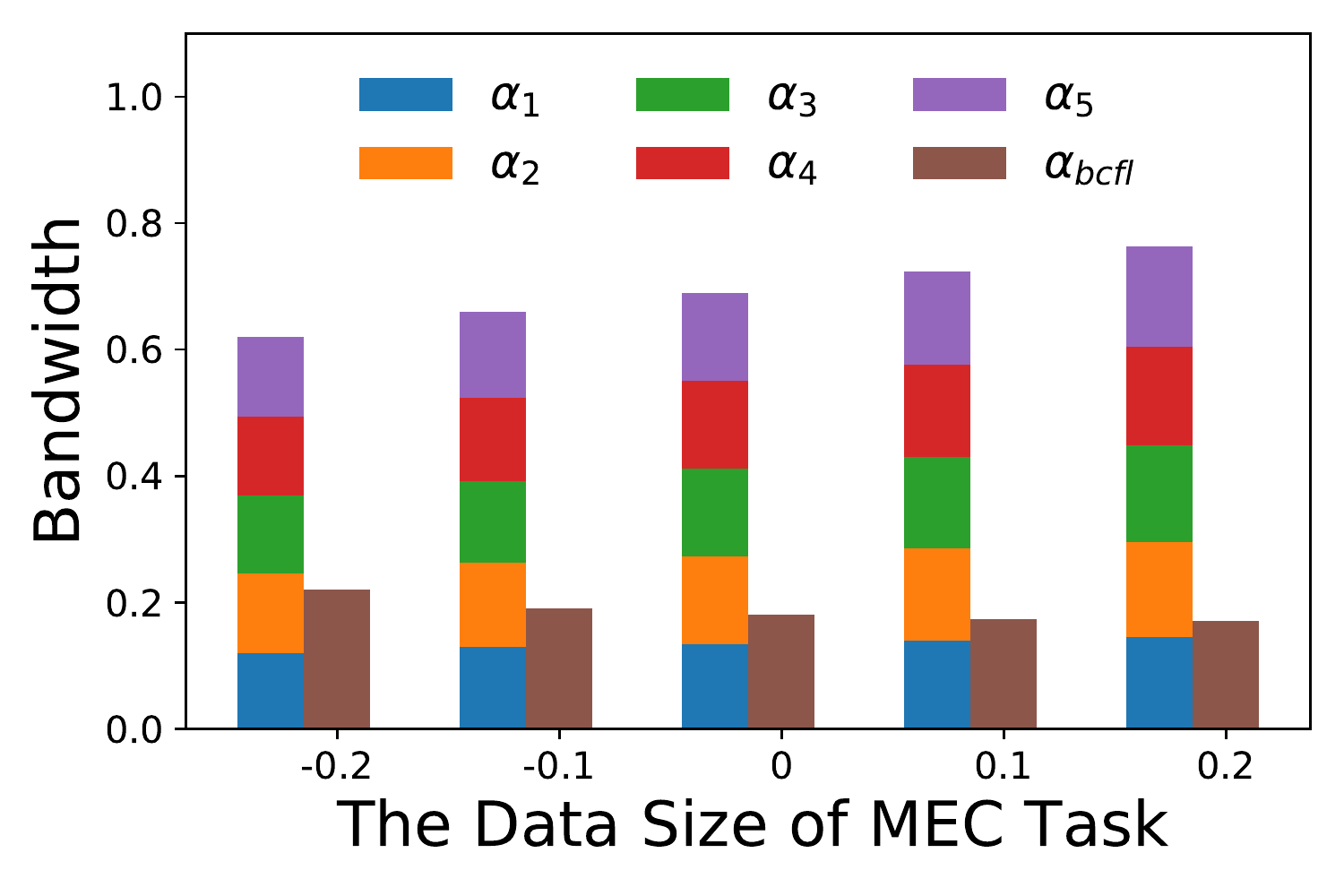}}
\subfigure[CPU cycle frequencies.]{
\includegraphics[width=0.23\textwidth]{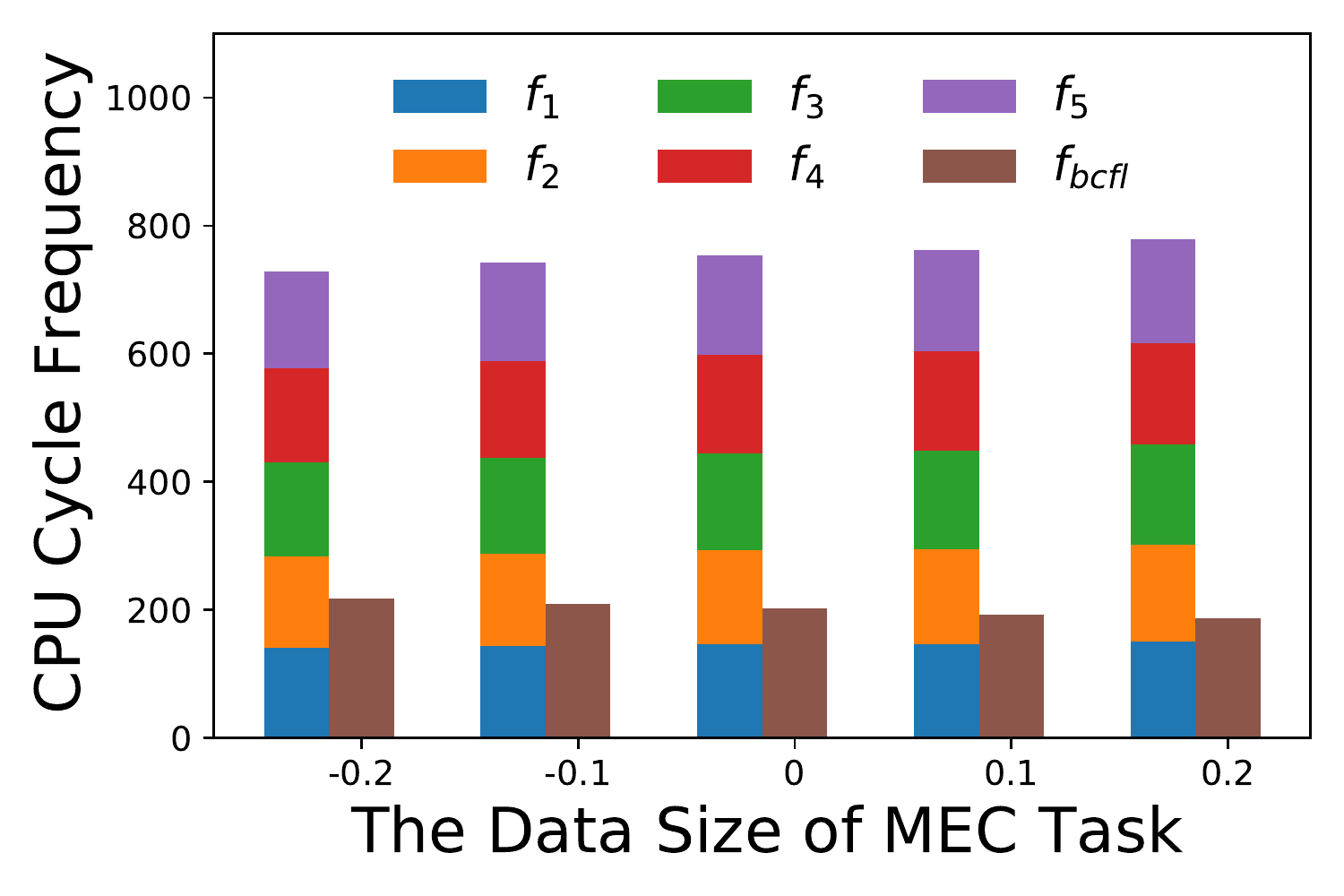}}
\subfigure[Bandwidth.]{
\includegraphics[width=0.23\textwidth]{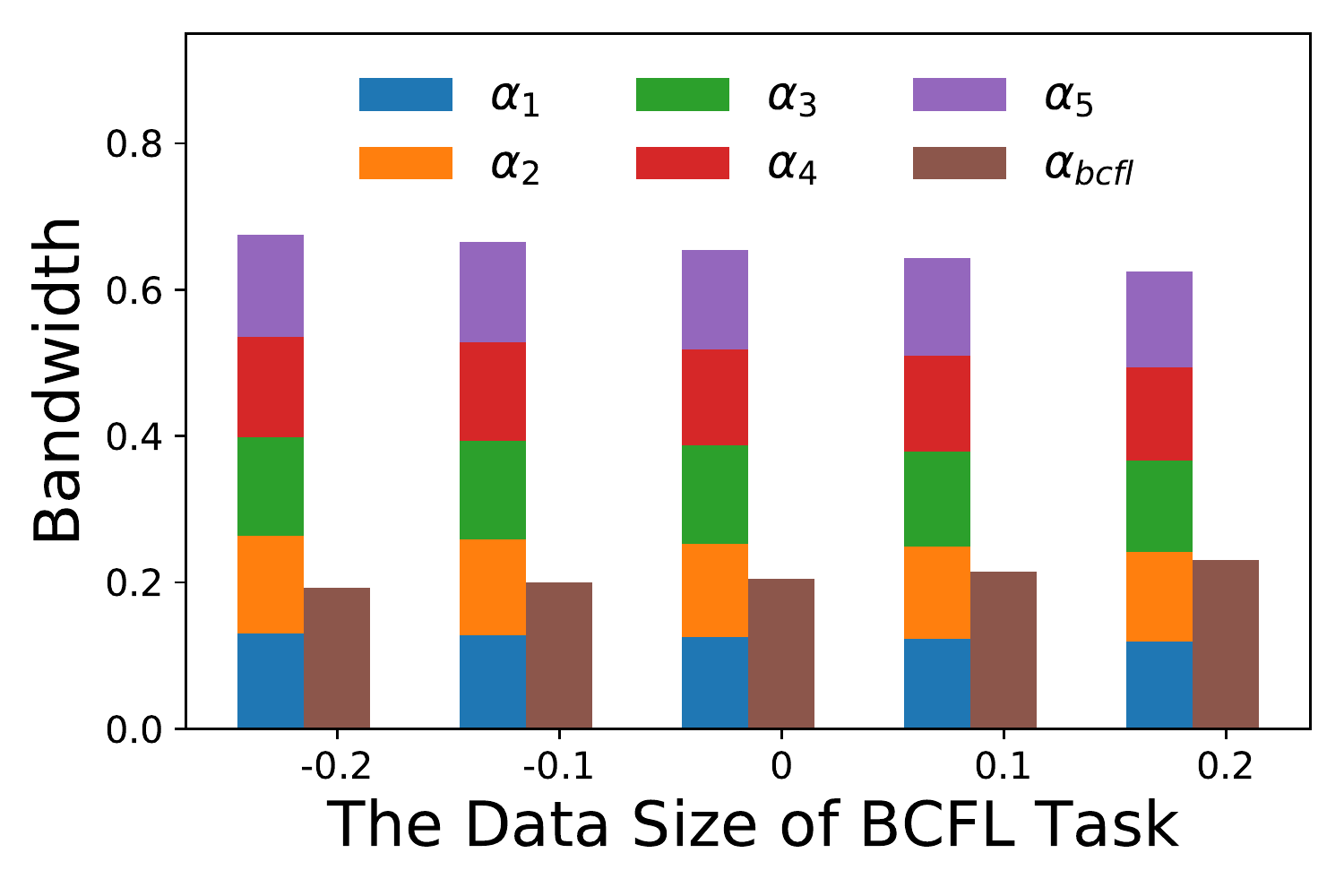}}
\subfigure[CPU cycle frequencies.]{
\includegraphics[width=0.23\textwidth]{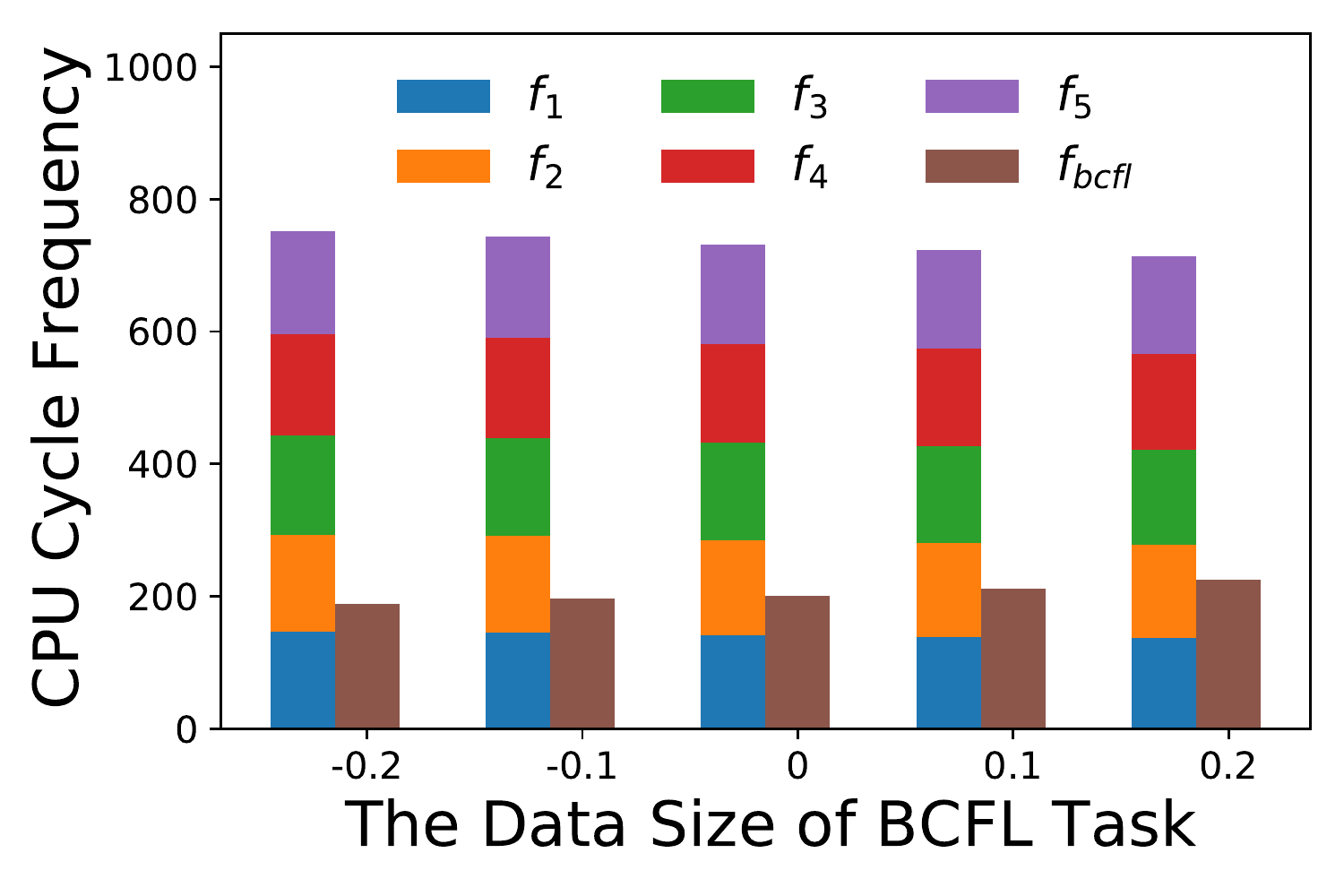}}
\caption{The optimal resource  allocation decisions based on the MC-ADMM algorithm.}
\label{fig_4}
\end{figure}

In \textbf{P3}, we have to determine $\alpha_i$ and $f_i$ for each $i\in\left\{1,2,3,\cdots,N\right\}$, as well as $\alpha_{bcfl}$ and $f_{bcfl}$. Thus, we need to calculate $2N+2$ variables. 
Here, we set $N=5$, and we want to know how the increase and decrease in the sizes of data for the MEC and BCFL tasks affect the optimal decisions. We first let $D_i$ decrease by $10\%$ and $20\%$, and then increase it by $10\%$ and $20\%$. The changes of the percentage are expressed as $\left\{-0.2,-0.1,0,0.1,0.2\right\}$ in Fig. \ref{fig_4}, where $0$ refers to the original data size.
From the results in Figs. \ref{fig_4}(a) and (b), we can see that more resources are allocated to the MEC tasks and fewer resources are distributed to the BCFL task when $D_i$ increases. Conversely, the results in Figs. \ref{fig_4}(c) and (d) show that more resources are assigned to the BCFL task when $D_{bcfl}$ is larger. This is consistent with the changing trends in the homogeneous scenario and can be explained by the same reason that more resources are needed to finish tasks with larger data sizes.







\subsubsection{The Evaluation of Latency}
In an ideal scenario, the MEC server would devote the appropriate resources to task processing based on the decisions obtained by the algorithms we designed. In this part, experiments are conducted to evaluate the latency of processing the MEC and BCFL tasks according to the decisions obtained from our algorithms.

\begin{figure}[h]
\centering
\subfigure[Latency changes with the data size of MEC task.]{
\includegraphics[width=0.23\textwidth]{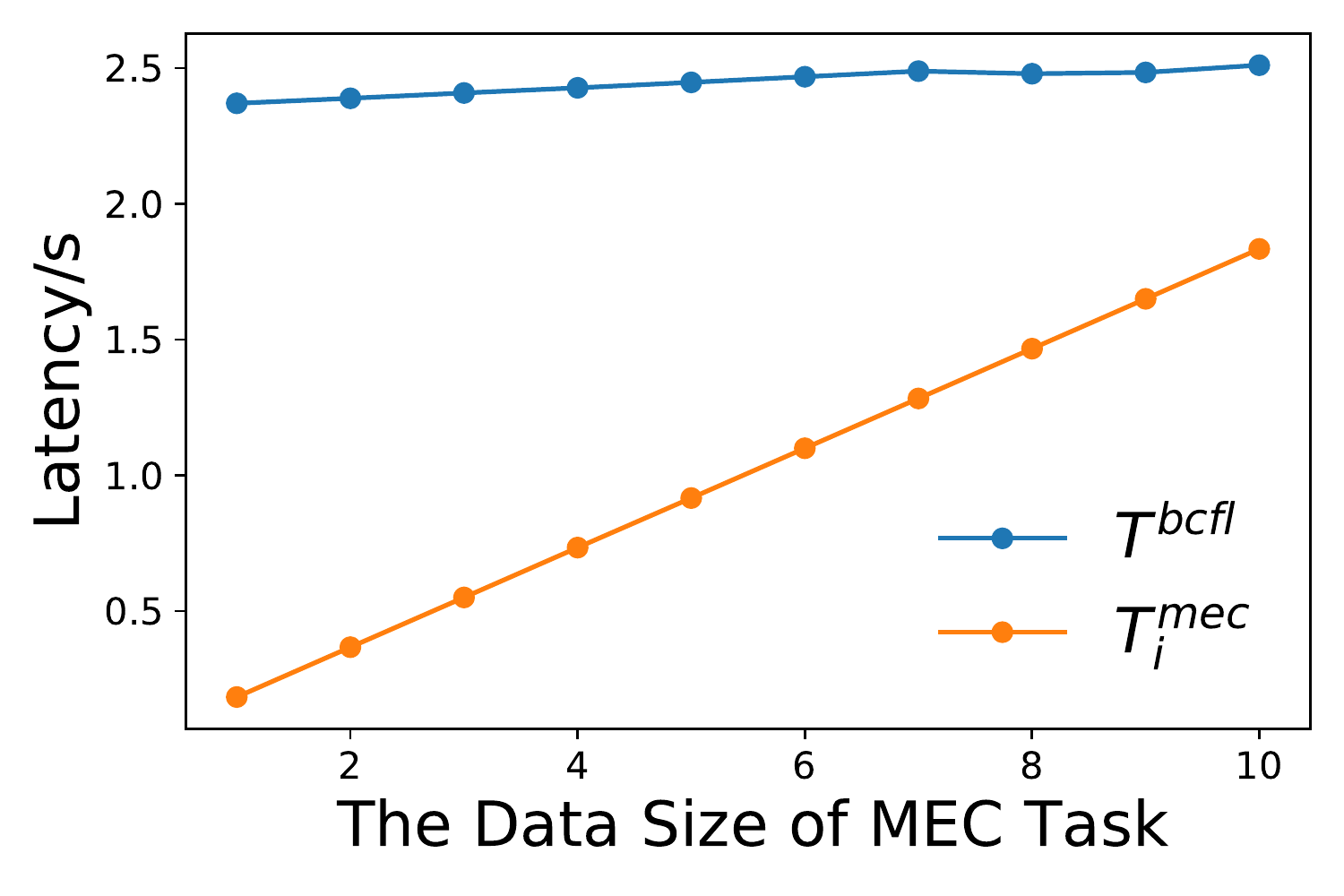}}
\subfigure[Latency changes with the data size of BCFL task.]{
\includegraphics[width=0.23\textwidth]{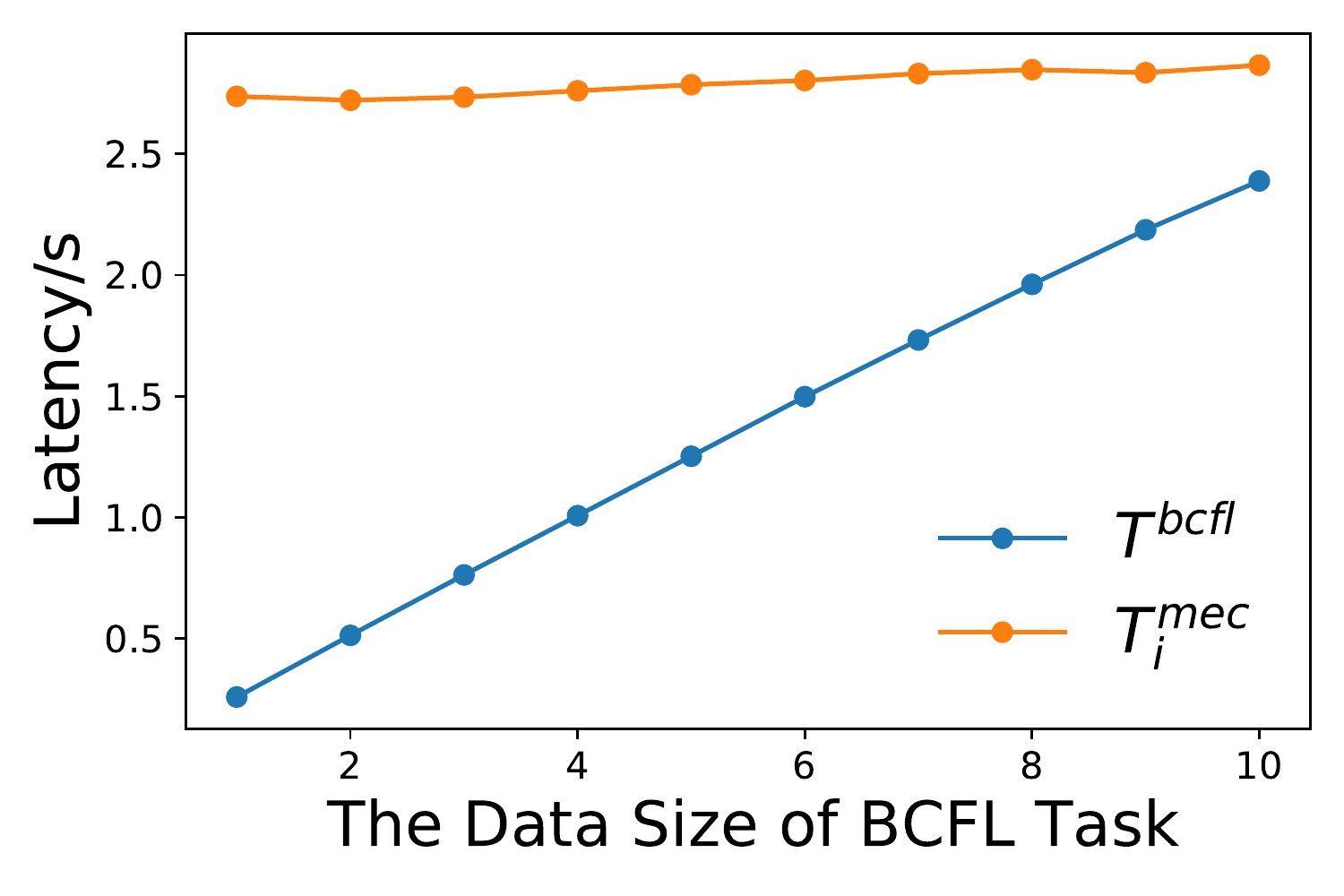}}
\caption{The Latency based on the MG-ADMM algorithm.}
\label{fig_time_1}
\end{figure}

\begin{figure}[h]
\centering
\subfigure[Latency changes with the data size of MEC task.]{
\includegraphics[width=0.23\textwidth]{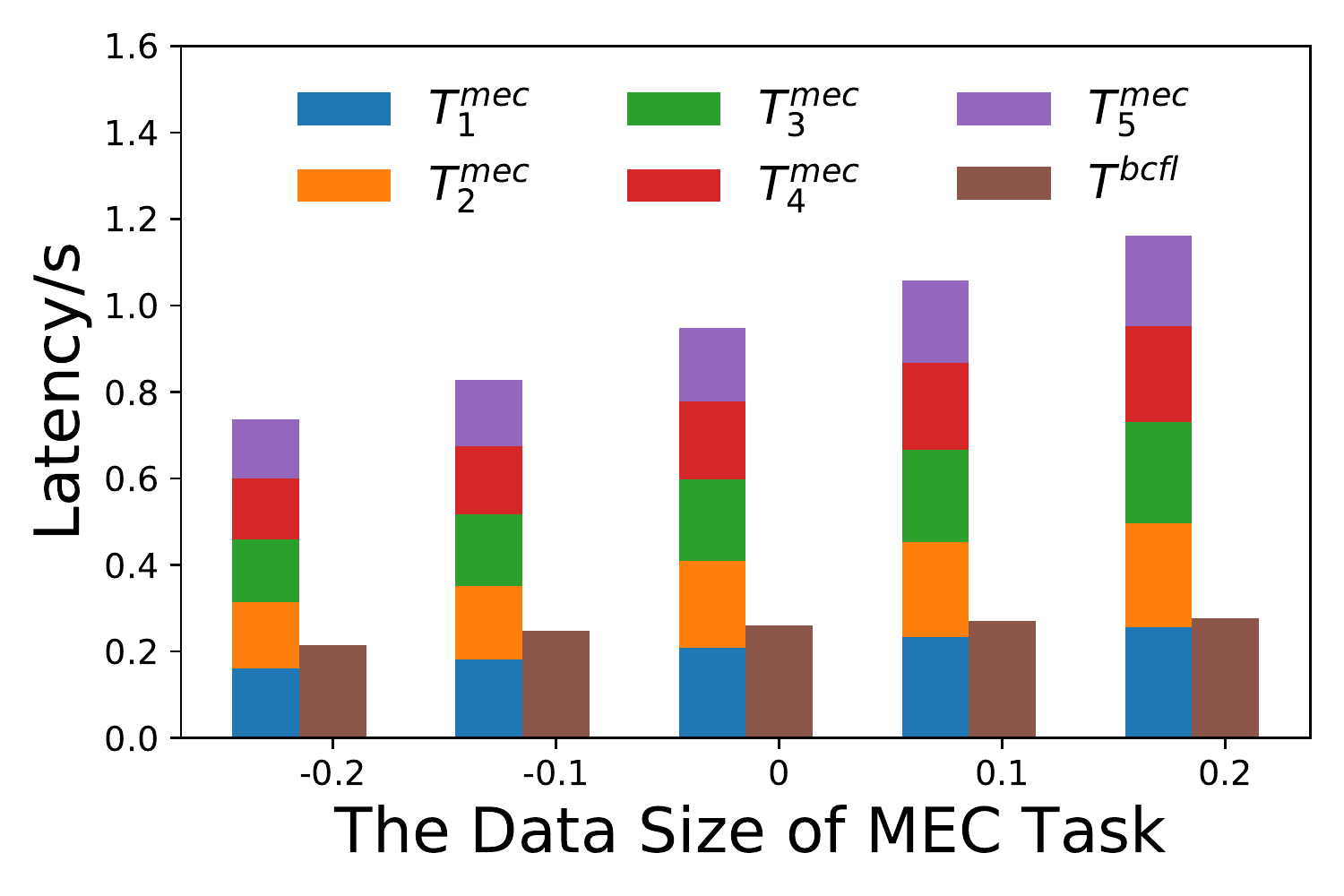}}
\subfigure[Latency changes with the data size of BCFL task.]{
\includegraphics[width=0.23\textwidth]{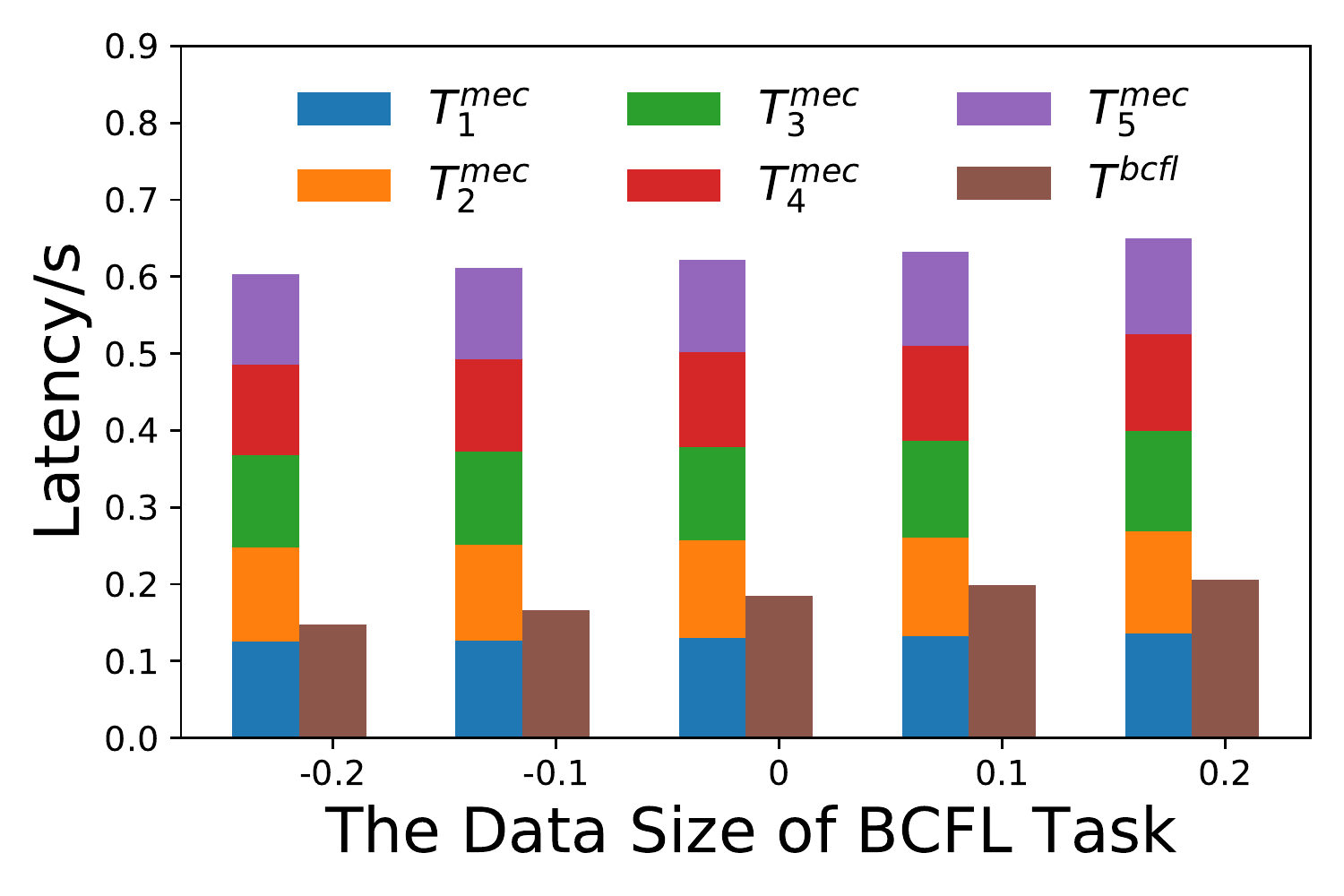}}
\caption{The Latency based on the MC-ADMM algorithm.}
\label{fig_time_2}
\end{figure}

First, we let $T_i^{mec}=T_i^{comm}+T_i^{comp}$ be the total time consumed by the MEC server in processing the MEC task submitted by user $i$ according to the optimal decisions. Similarly, we can define $T^{bcfl}=T_{bcfl}^{comm}+T_{bcfl}^{comp}$ as the time consumption for processing the BCFL task.

Based on the same experimental settings as in Fig. \ref{fig_2}, we calculate the latency of completing both MEC and BCFL tasks. The results based on MG-ADMM are shown in Fig. \ref{fig_time_1}. In Fig. \ref{fig_time_1}(a), we can see that $T^{bcfl}$ increases slightly and $T_i^{mec}$ increases significantly when $D_i$ increases. This is because when the data size of MEC task is larger, more time will be required to complete this task. While less resources will be allocated to process the BCFL task, $T^{bcfl}$ will be also larger. Similarly, we can see the results with the change of $D_{bcfl}$ in Fig. \ref{fig_time_1}(b). 

Then, we analyze the latency of the MC-ADMM algorithm with the same settings as in Fig. \ref{fig_4}. The results are shown in Fig. \ref{fig_time_2}. It is clear that when the data sizes of the MEC and BCFL tasks required to be processed increase, the time spent by the server to complete the tasks also increases.

\section{Related Work}\label{rel}

Recently, there are many studies focusing on deploying BCFL on the edge servers. In \cite{zhao2020mobile}, the authors design a BCFL system running on at the edge with edge servers being responsible for collecting and training the local models, where a device selection mechanism and incentive scheme are proposed to facilitate the performance of the crowdsensing. Rehman \textit{et al.} \cite{ur2020towards} devise a blockchain-based reputation-aware fine-gained FL system to enhance the trustworthiness of devices in the MEC system. The work in \cite{shen2020exploiting} tries to address the privacy protection issue for BCFL in MEC via resisting a novel property inference attack, which attempts to cause the unintended property leakage. Hu \textit{et al.} \cite{hu2021blockchain} deploy a BCFL framework on the MEC edge servers to facilitate finishing mobile crowdsensing tasks, which aims to achieve privacy preservation and incentive rationality at the same time. Qu \textit{et al.} \cite{qu2021chainfl} provide a simulation platform for BCFL in the MEC environment to measure the quality of local updates and configurations of IoT devices. From these studies, it can be concluded that the development of BCFL in MEC is promising, even though there are still some challenges that should be tackled.

Specifically, resource allocation is one of the crucial but open challenges.
Since the resources of edge servers are usually limited, it is essential to design a resource allocation scheme for edge servers to provide satisfactory services for both the MEC and the BCFL tasks with the minimum cost. Wang \textit{et al.} \cite{wang2022incentive} design a joint resource allocation mechanism in BCFL, which assists the participants in deciding the proper resources for completing training and mining tasks. 
In \cite{fan2020hybrid}, a hybrid blockchain-assisted resource trading system is designed to achieve the decentralization and efficiency for FL in MEC. Li \textit{et al.} \cite{li2021blockchain} propose a BCFL framework to tackle the security and privacy challenges of FL, where a computing resource allocation mechanism for training and mining is also designed by optimizing the upper bound of the global loss function. One main vulnerability of this scheme is that all participants are assumed to be homogeneous, which is clearly impractical in the mobile scenario.

In summary, none of the existing studies related to the implementation of BCFL in MEC has ever addressed the resource allocation challenge between the MEC tasks and the BCFL task. Because of the dual roles of edge servers in BCFL and MEC, they have to simultaneously finish the upper-layer BCFL task and provide MEC services for the lower-layer mobile devices. To fill this gap, we devise resource allocation schemes for edge servers in the deployment of BCFL at edge to guarantee the service quality to both sides with the minimum cost.

\section{Conclusion}\label{conc}
In this paper, we are the first to address the resource allocation challenge for edge servers when they are required to handle both the BCFL and MEC tasks. We formulate the design of the resource allocation scheme into a convex, multivariate optimization problem with multiple inequity constraints, and then we design two algorithms based on ADMM to solve it in both the homogeneous and heterogeneous scenarios. Solid theoretic analysis is conducted to prove the validity of our proposed solutions, and numerous experiments are carried out to evaluate the correctness and effectiveness of the algorithms.

%





\ifCLASSOPTIONcaptionsoff
  \newpage
\fi



\bibliographystyle{IEEEtran}
\bibliography{main.bib}
%


%




\begin{IEEEbiography}[{\includegraphics[width=1in,height=1.25in,clip,keepaspectratio]{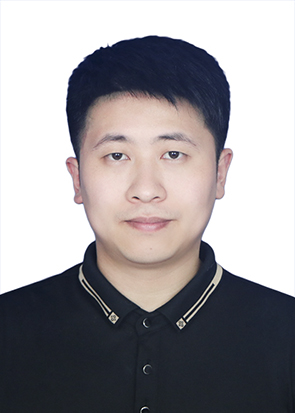}}]{Zhilin Wang} received his B.S. from Nanchang University in 2020. He is currently pursuing his Ph.D. degree of Computer and Information Science In Indiana University-Purdue University Indianapolis (IUPUI). He is a Research Assistant with IUPUI, and he is also a reviewer of 2022 IEEE International Conference on Communications (ICC) and IEEE Access. His research interests include blockchain, federated learning, edge computing, and Internet of Things (IoT).
\end{IEEEbiography}

\begin{IEEEbiography}[{\includegraphics[width=1in,height=1.25in,clip,keepaspectratio]{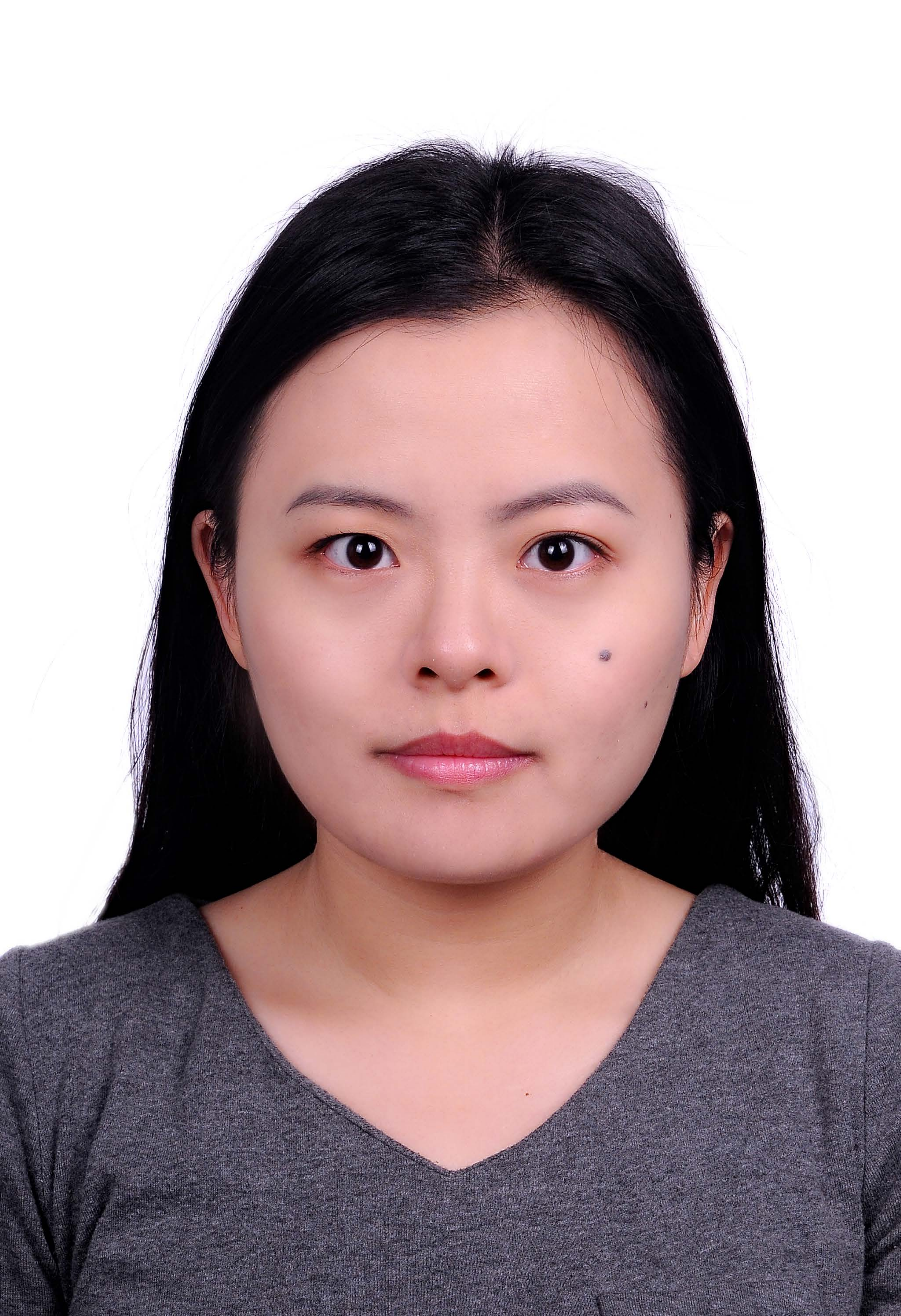}}]{Qin Hu} received her Ph.D. degree in Computer Science from the George Washington University in 2019. She is currently an Assistant Professor with the Department of Computer and Information Science, Indiana University-Purdue University Indianapolis (IUPUI). She has served on the Editorial Board of two journals, the Guest Editor for two journals, the TPC/Publicity Co-chair for several workshops, and the TPC Member for several international conferences. Her research interests include wireless and mobile security, edge computing, blockchain, and crowdsensing.
\end{IEEEbiography}

\begin{IEEEbiography}[{\includegraphics[width=1in,height=1.25in,clip,keepaspectratio]{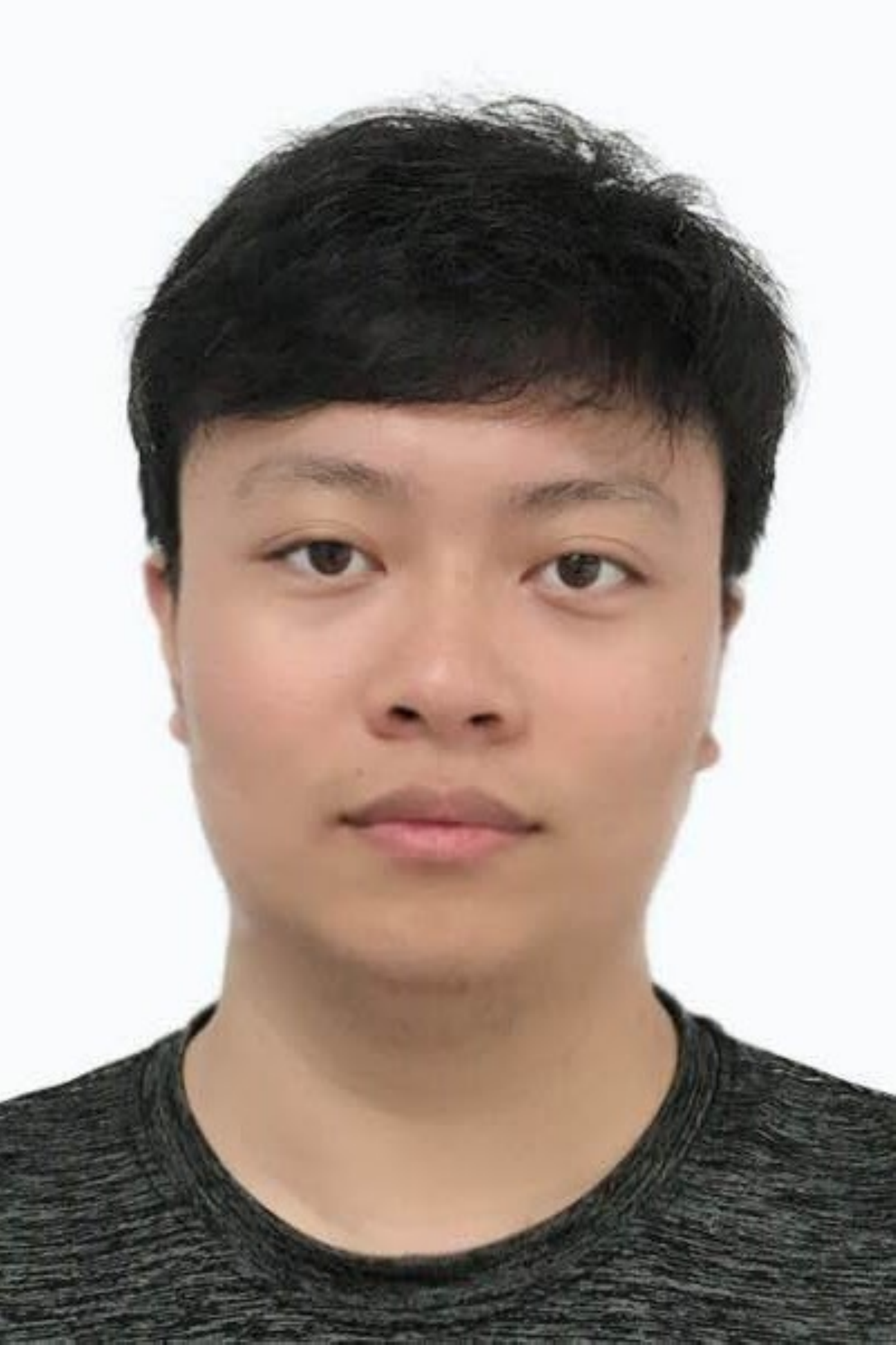}}]{Zehui Xiong} is currently an Assistant Professor in the Pillar of Information Systems Technology and Design, Singapore University of Technology and Design. 
He received the PhD degree in Nanyang Technological University, Singapore. 
His research interests include wireless communications, network games and economics, blockchain, and edge intelligence. He has published more than 140 research papers in leading journals and flagship conferences and many of them are ESI Highly Cited Papers. He has won over 10 Best Paper Awards in international conferences and is listed in the World’s Top $2\%$ Scientists identified by Stanford University. He is now serving as the editor or guest editor for many leading journals including IEEE JSAC, TVT, IoTJ, TCCN, TNSE, ISJ, JAS. 
\end{IEEEbiography}

\end{document}